\newcommand{\truth}{\top}
\newcommand{\falsity}{\bot}
\newcommand{\tick}{\surd}
\newcommand{\cross}{\times}
\newcommand{\prune}{PRUNE}
\newcommand{\prunez}{PRUNE$_0$}
\newcommand{\longversion}[1]{#1}
\newcommand{\shortversion}[1]{}
\newcommand{\atoms}{AP}
\newcommand{\natn}{N}
\newcommand{\clos}{cl}
\newtheorem{theorem}{THEOREM}
\newtheorem{lemma}[theorem]{LEMMA}
\newenvironment{proof}{
\begin{quotation}PROOF:}{
$\square$ \end{quotation}}
\begin{document}

\title{A traditional tree-style tableau for LTL\longversion{:\\ LONG VERSION}}
\author{Mark Reynolds,\\
The University of Western Australia.\\
{\tt mark.reynolds@uwa.edu.au}
\thanks{
The author would like to 
thank staff, students
and especially Prof. Angelo Montanari at Universit\`a degli studi di Udine,
for interesting and useful discussions
on the development of this approach
while the author was kindly hosted on sabbatical there
in 2014.}
}

\maketitle

  \begin{abstract}
Propositional linear time temporal logic (LTL) is the standard temporal logic for computing applications and many reasoning techniques and tools have been developed for it. Tableaux for deciding satisfiability have existed since the 1980s. However, the tableaux for this logic do not look like traditional tree-shaped tableau systems and their processing is often quite complicated. We present a new simple traditional-style tree-shaped tableau for LTL and prove that it is sound and complete. As well as being simple to understand, to introduce to students and to use manually, it also seems simple to implement and promises to be competitive in its automation. It is particularly suitable for parallel implementations.

  \end{abstract}

  \newcommand{\webpage}{\url{http://www.csse.uwa.edu.au/~mark/research/Online/ltlsattab.html}}
  
  Note:
the latest version of this report can be found via
\webpage.

\section{Introduction}
\label{sec:intro}

Propositional linear time temporal logic,
LTL, is important for hardware and software verification\cite{DBLP:conf/spin/RozierV07}.
LTL satisfiability checking (LTLSAT) is receiving renewed interest
with advances computing power, several industry ready tools,
some new theoretical techniques,
studies of the relative merits of different approaches,
implementation competitions, and benchmarking:
\cite{Goranko2010113,VSchuppanLDarmawan-ATVA-2011,DBLP:conf/spin/RozierV07}.
Common techniques
include
automata-based approaches 
\cite{VaW94,RV11}
and
resolution
\cite{DBLP:journals/aicom/LudwigH10}
as well as tableaux \cite{Gou89,Wol85,DBLP:conf/cav/KestenMMP93,Sch98}.
Each type of approach
has its own advantages and disadvantages
and each can be competitive
at the industrial scale
(albeit within the limits of what may
be achieved with PSPACE complexity).
The state of the art in tableau reasoners for
LTL satisfiability testing is the
technique from \cite{Sch98}
which is used in portfolio reasoners such as
\cite{VSchuppanLDarmawan-ATVA-2011}.

 Many LTL tableau approaches
 produce a very untraditional-looking
 graph,
 as opposed to a tree,
 and need the whole graph to be present
 before
 a second phase of discarding takes place.
 Within the stable of tableau-based approaches to
 LTLSAT, the system
 of \cite{Sch98}
 stands out in
 in being tree-shaped (not a more general graph),
 and in being one-pass,
 not relying on a two-phase building and pruning process.
 It also stands out in speed
 \cite{Goranko2010113}.
 However,
 there are still elements
 of communication between separate branches
 and a slightly complicated
 annotation of nodes with depth measures that
 needs to be managed
 as it feeds in to the tableau rules.
So it is not in the traditional 
style of classical tableaux \cite{Mor12}.

This paper presents
a new simpler tableau for LTL.
It builds on ideas from \cite{SiC85}
and is influenced by LTL tableaux
by \cite{Wol85} and \cite{Sch98}.
It also uses some ideas
from a CTL* tableau approach in \cite{Rey:startab} where
uselessly long branches are curtailed.
The general shape of the
tableau and
its construction rules are mostly unsurprising but the two novel
PRUNE rules
are perhaps a surprisingly simple way to curtail
repetitive branch extension and may be applicable
in other contexts.

The tableau search
allows completely independent
searching down separate branches
and so lends itself to parallel
computing.
In fact this approach is
``embarrassingly parallel''
\cite{Fos95}.
Thus there is also potential for quantum
implementations.
Furthermore, only formula set labels need to be recorded
down a branch, and checked back up the one branch,
and so there is great potential for
very fast implementations.

The soundness, completeness and termination of the tableau search is
proved. The proofs are mostly straightforward.
However, the completeness proof with the PRUNE rules
has some interesting reasoning.

We provide a simple demonstration 
prototype reasoning tool to allow readers
to explore the tableau search process.
The tool is not optimised for speed in any way
but
we report on a small range of
experiments on standard benchmarks which demonstrate that the 
new tableau will be competitive
with the current state of the art \cite{Sch98}.

In this paper
we give some context
in Section~\ref{sec:other},
before we briefly confirm our standard version of the
well-known syntax and semantics for LTL
in Section~\ref{sec:synsem},
describe our tableau approach is general terms
section~\ref{sec:tab},
present the rules section~\ref{sec:rules},
make some comments and provide some motivation for our approach
section~\ref{sec:motiv},
prove soundness in section~\ref{sec:sound},
prove completeness in section~\ref{sec:complete},
and briefly discuss
complexity and implementation issues section~\ref{sec:complex},
and detailed comparisons with the Schwendimann approach in Section~\ref{sec:compare},
before a conclusion
section~\ref{sec:concl}.

The latest version of this long report
can be found at
\url{http://www.csse.uwa.edu.au/~mark/research/Online/ltlsattab.html}
with links to a Java implementation of the tool
and
full details of experiments.

\shortversion{
Full versions of the (short) proofs can be found
in an online technical report
\cite{ltlsattablong}.}

\section{Context and Short Summary of Other Approaches}
\label{sec:other}

LTL is an important logic and there has been sustained 
development of techniques and tools
for working with it over more than half a century.
Many similar ideas appear as parts of different
theoretical tools and it is hard for a researcher
to be across all the threads.
Thus it is worth putting the ideas here in 
some sort of context.

\subsection{Satisfiability checking versus Model Checking}

We will define structures and formulas
more carefully
in the next section
but a structure is essentially a way
of definitively and unambiguously describing an infinite 
sequence of states
and an LTL formula
may or may not apply to the sequence:
the sequence may or may not be a model of the formula.
See section 3 or \cite{Pnu77} for details.

We are addressing a computational task called 
{\em satisfiability checking}.
That is, given an LTL formula,
decide whether or not 
there exists any structure at all which is
a model of that formula.
Input is the LTL formula,
output is yes or no.
The procedure or algorithm must terminate
with the correct answer.

This problem is PSPACE complete in the size of formula
\cite{SiC85}.

There is a related but separate task called {\em model checking}.
Model checking is 
the task of working out whether
a given structure is a model or not
of a given formula.
Input is the LTL formula and a description of a system,
output is yes or no,
whether or not there is a
behaviour generated by the system
which is a model of the formula
(or some variant on that).
Model checking may seem to be more computationally demanding
in that there is a formula and system
to process.
On the other hand,
as described in
\cite{DBLP:conf/spin/RozierV07}
and outlined below,
it is possible to use
 model checking algorithms to do satisfiability checking
 efficiently (in PSPACE).
Model-checking itself, with
a given formula
on a given structure,
is also
PSPACE-complete \cite{SiC85},
but in \cite{LiP85} we can find
an algorithm that is exponential in the size of the formula
and linear in the size of the model.

\subsection{What is a tableau approach
as compared to an automata-based approach or 
a reduction to model-checking}

The thorough experimental comparison in
\cite{VSchuppanLDarmawan-ATVA-2011}
finds that across 
a wide range of carefully chosen benchmarks,
none of the usual three
approaches to LTL satisfiability testing
dominates.
Here we briefly 
introduce tableaux and
distinguish the 
automata-based model-checking approach.
Resolution is a very different technique
and we refer the reader elsewhere, for example to
\cite{DBLP:journals/aicom/LudwigH10,FDP01,hustadt2003trp++}.

Tableau approaches trace their 
origins to the semantic tableaux,
or truth trees,
for classical logics as developed
by Beth \cite{Beth55} and Smullyan \cite{Smu68}.
The traditional tree style remains
for many modal logics \cite{Girle00}
but, as we will
outline below,
this is not so for the LTL logic.

The tableau itself is (typically)
a set of nodes,
labelled by single formulas or sets of formulas,
with a successor relation between nodes.
The tree tableau has a root
and the successor relation
gives each node $0$, $1$ or $2$ children.
The tableau is typically depicted with the
root at the top and
the children below their parent:
thus giving an upside tree-shape.

Building a tableau constitutes
a decision procedure for the logic
when it is governed by
rules for labelling children nodes,
for counting a node as the
leaf of a successful branch
and for counting a node
as the leaf of a failed branch.
Finding one successful branch
typically means that the original
formula is satisfiable while
finding all branches failing
means it is not satisfiable.
There are efficiently implemented 
tableau-based
LTL satisfiability reasoning tools,
which are easily available,
such as {\tt pltl} \cite{PLTL}
and
{\tt LWB} \cite{LWB}.  
We describe the approach and its variations in more 
detail in the next subsection.

The other main approach to 
LTL satisfiability checking
is based on a reduction of the
task to a model-checking question,
which itself is often
implemented via automata-based
techniques.
The traditional automata approach to model checking
seen, for example in \cite{VaW94},
is to compose the model
with an automaton that
recognizes the negation of the property of interest
and check for emptiness
of the composed structure.

Recently there has been some serious development
of satisfiability checking 
on top of automata-based model-checking that
ends up having some tableau-like aspects.
The automata approach via model-checking
as described in
\cite{DBLP:conf/spin/RozierV07}
shows that
if you can do model checking you can do satisfiability checking.
One can construct an automaton
which will 
represent a universal system allowing all possible traces
from the propositions.
A given formula is satisfiable
iff
the universal system
contains a model of the formula.
By not actually 
building the universal system
entirely the overall task
can be accomplished in PSPACE in the size of formula.
Some fast and effective 
approaches to satisfiability checking can then be
developed by
reasoning 
about the model checking
using symbolic 
(bounded) SAT or BDD-based reasoners such as
CadenceSMV and NuSMV
\cite{DBLP:conf/spin/RozierV07}.
Alternatively,
{\em explicit} automata-based model-checkers
such as SPIN \cite{Hol97}
could be used for the model checking
but \cite{DBLP:conf/spin/RozierV07}
shows these to not be competitive.

An important part of the symbolic approach is
the construction of a symbolic automaton
from the formula
and this involves a 
 tableau-like construction
 with sets of subformulas determining states.
The symbolic automaton or tableau
presented as part of the model-checking procedure in
 \cite{CGH97}
has been commonly used
but this has been extended in \cite{RV11}
to a portfolio of translators.
Also an LTL to tableau tool used in \cite{RV11} is available via
the first author's website.
We discuss these sorts of tableaux
again briefly below.

The model-checking approach
to LTL satisfiability checking is further
developed 
in \cite{LZPVH13} where
a novel, on-the-fly 
process combines the automata-construction
and emptiness check.
The tool out-performs
previously existing tools for LTLSAT
which use model-checking.
Comprehensive experimental comparisons
with approaches to LTLSAT beyond
automata-theoretic model-checking based ones
is left as future work.

\subsection{
Different shapes of tableaux
and different ways to the search through the tableau}

In some modal logics,
and in temporal logics in particular,
variations on
the traditional tableau idea have been prevalent
and the pure tree-shape
is left behind as a more
complicated graph of nodes is constructed.
As we will see, this
may happen in these variant tableaux,
when the successor relation is allowed to have
{\em up-links}
from descendants to ancestors,
{\em cross-links}
from nodes to nodes on other branches,
or where there is just an arbitrary
directed graph of nodes.

Such a {\em graph-shaped} approach 
results if, for example,
we give a {\em declarative}
definition of a node and the successor relation
determined by the labels on the node
at each end.
Typically,
the node is identified with its label.
Certain sets of formulas are allowed
to exist as nodes and we have at most one node
with a given label.
Examples of this sort include the tableaux in \cite{Gou84} and
(the symbolic automaton of ) \cite{CGH97}.

Alternatively, 
the tableau may have a 
traditional form that is essentially {\em tree-shped} with
a root and branches of nodes descending
and branching out below that.
Usually a limited form of up-link is allowed
back from a node (leaf or otherwise)
to one of its ancestors.
There may be tableaux which contain several different
nodes with the same label.
Examples, include \cite{Schwe98} and the new one.
The new style tableau even allows multiple nodes down the
same branch with the same labels
while this is not permitted in \cite{Schwe98}.

The tableau construction may be 
{\em incremental}, where only reachable states are constructed,
versus {\em declarative},
when we just define what labels are present in the tableau
and which pairs of labels
are joined by a directed edge.

A tableau is said to be {\em one-pass}
if the construction process
only build legitimate nodes as it proceeds.
On the other hand, it is {\em multi-pass}
if there is an initial construction phase
followed by a culling phase in which
some of the nodes (or labels)
which were constructed are removed as
not being legitimate.

A construction of a tableau,
or a search through a tableau,
will often proceed 
in a {\em depth-first} manner
starting at a chosen node and then 
moving successively to successors.
An alternative is via some sort of {\em parallel}
implementation 
in which branches are explored concurrently.
Search algorithms may make use of heuristics
in
guessing a good branch to proceed on.
Undertaking a depth-first search in a graph-shaped
label-determined
tableau may seem to be similar to
building a tree-shaped tableau 
but 
it is likely that the algorithm will
behave differently when 
it visits a label that has been seen before
down an earlier branch.
In a tree-shaped tableau this may not need to be
recorded.

Which is faster?
Trees \cite{Schwe98}
may have
2EXPTIME worst case complexity.
Graphs re-use labels
and make many possible branches 
at one time:
in general EXPTIME.
However, \cite{Goranko2010113}
demonstrated that
the tree-shaped approach of \cite{Schwe98}
(consistently and sometimes drastically) outperformed 
the graph-shaped approach of \cite{Wol85}.

\subsection{The Wolper and Schwendiman Tableau}

Wolper's \cite{Wol83,Wol85}
was the first LTL tableau.
It is a multi-pass, graph-shaped tableau.
The nodes are labelled with sets of formulas 
(from the closure set)
with a minimal amount of extra notation
attached to record which 
formulas already have been decomposed.
One builds the graph starting at $\{ \phi \}$
and using the standard sorts of decomposition
rules and the transition rule.
The tableau may start off looking like a tree
but there must not be repeated labels
so edges generally end up 
heading upwards and/or crossing branches.
After the construction
phase there is iterated elimination
of nodes according to rules
about successors
and eventualities.
(This approach was later extended to
cover the inclusion of past-time operators
in \cite{LP00:igpl}).

There was a similar but slightly quicker
proposal for a multi-pass, graph-shaped
tableau for LTL in \cite{Gou84}.
The similar tableau in \cite{DBLP:conf/cav/KestenMMP93}
is incremental but multi-pass as it
builds a graph from initial states,
then looks for strongly connected components
(to satisfy eventualities).
Other graph-like tableaux include
those in
\cite{SGL97,MaP95}.

A graph-shaped tableau for LTL also forms
part of the model-checking
approach suggested in \cite{CGH97}.
Here the need to 
check fulfilment of eventualities
is handed over to some CTL fairness constraint
checking on a structure
formed from the product
of the  tableau and the model to be checked.
The symbolic model checker SMV
is used to check the property
subject to those fairness constraints.
In \cite{RV11}, the `symbolic automaton' approach based
on the tableau from \cite{CGH97} was adapted
to tackle LTL satisfiability checking.
Figure~\ref{fig:cgh97tab}
shows a typical graph-style tableau
from \cite{CGH97}.

\begin{figure}
\centering
\includegraphics[width=10cm,trim= 4cm 8.5cm 4cm 4.5cm,clip=true]{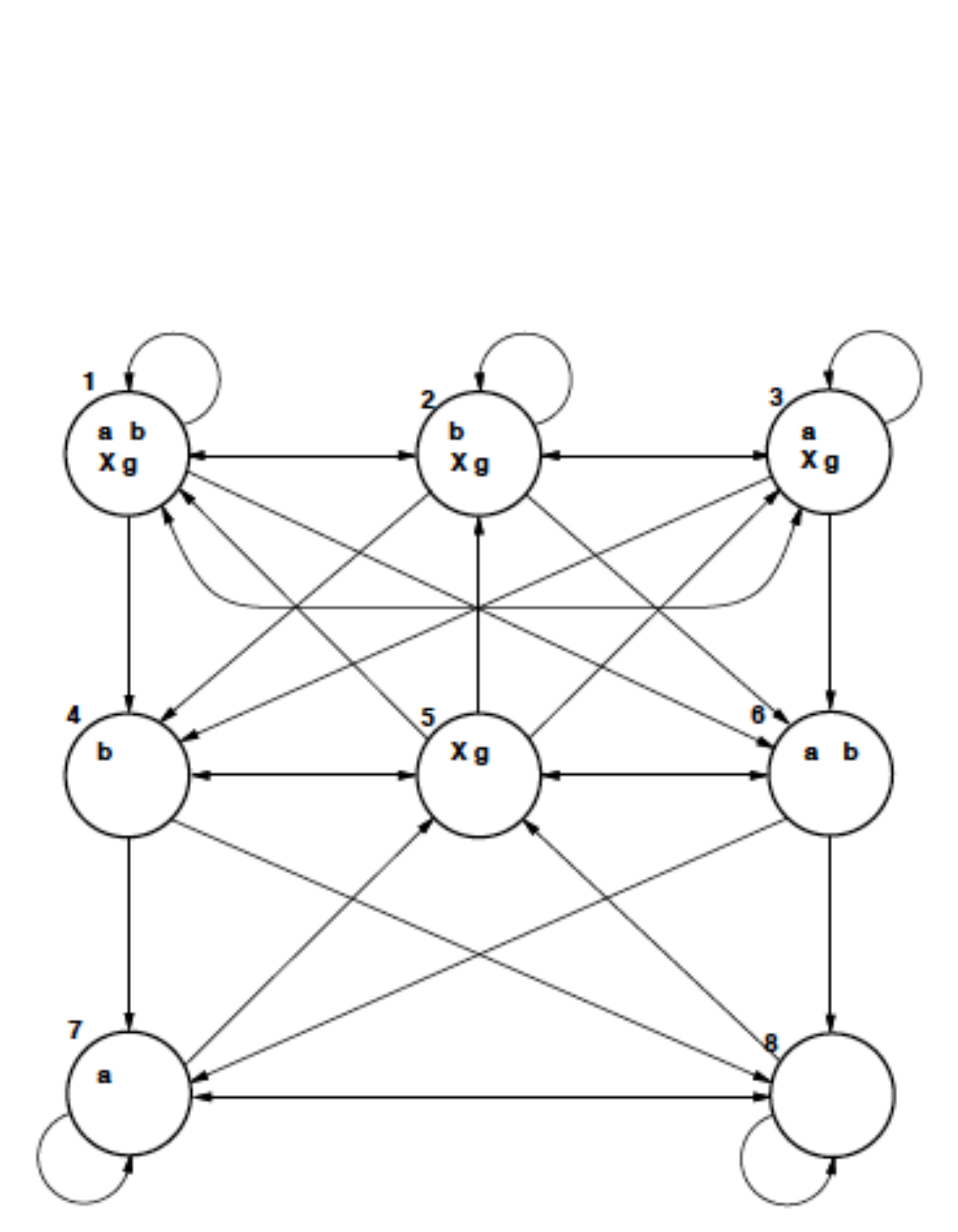}
\caption{A graph-shaped tableau for $g= a U b$ from [CGH97]}
\label{fig:cgh97tab}
\end{figure}

Schwendimann's tableau \cite{Schwe98}
is close to being purely tree-shaped (not a more general graph).
It is also one-pass,
 not relying on a two-phase building and pruning process.
 However,
 there are still elements
 of communication between separate branches
 and a slightly complicated
 annotation of nodes with depth measures that
 needs to be managed
 as it feeds in to the tableau rules.
 In general, to decide on
 success or not for a tableau
 we need to work back up the tree towards the root,
 combining information from
 both branches if there are children.
 Thus, after some construction downwards,
  there is an iterative process
 moving from children to parents
 which may need to wait for both children to
 return their data.
 This can be done by tackling one branch at a time
 or in theory, in parallel.
 The data passed up consists in an index number
 and a set of formulas (unfulfilled eventualities).
 We describe this approach in more detail
 in Section~\ref{sec:compare}.
Figure~\ref{fig:schwetab}
shows a typical tree-style tableau
from \cite{Schwe98}.

\begin{figure}
\centering
\includegraphics[width=10cm,trim= 4cm 11cm 4cm 11cm,clip=true]{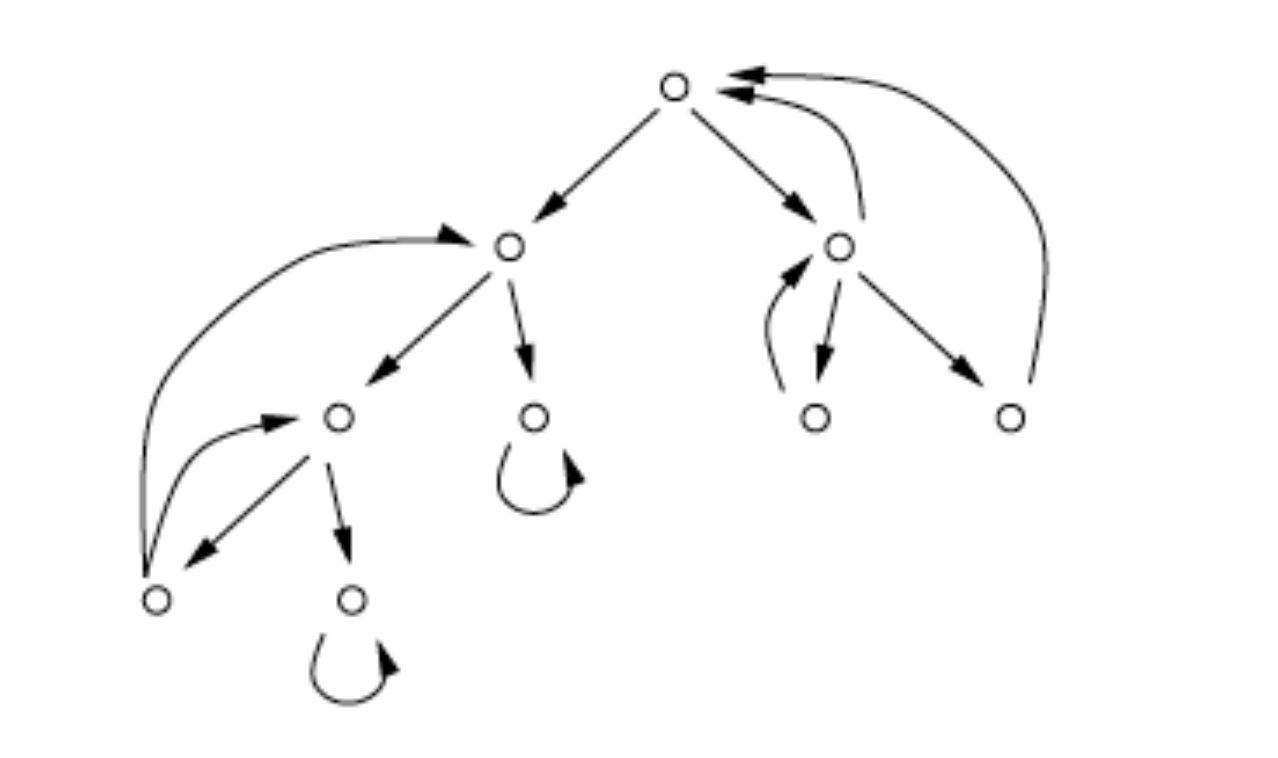}
\caption{A tree-shaped tableau from [Sch98]}
\label{fig:schwetab}
\end{figure}

\subsection{Implementations}

The main available
tools for
LTL satisfiability
checking
are listed and described
 in \cite{VSchuppanLDarmawan-ATVA-2011}.
 These included
 {\tt pltl}
 which (along with another tableau option)
 implements Schwendiammn's
 approach.

We should also mention \cite{LZPVH13} here.
The tool uses a novel, on-the-fly approach to LTLSAT
via model-checking
and out performs
previously existing tools for LTLSAT
which use the model-checking approach.
There is an open source LTL to tableau translator available from Rozier's website
used for the \cite{RV11} approach.
Also, a more limited implementation was released open-source following the publication of \cite{CGH97}: it is called ltl2smv and distributed with the NuSMV model checker.

\subsection{Benchmarking}

There is a brief comparison of 
the tableau approaches of
Schwendimann and Wolper 
 in \cite{Goranko2010113} and 
in \cite{RV11} there is a comparison
symbolic model-checking approaches.

However,
the most thorough benchmarking exercise
is as follows.
Most known implemented tools for deciding satisfiability
in LTL are compared in \cite{VSchuppanLDarmawan-ATVA-2011}.
The best tools from three classes are chosen and compared:
automata-based reduction to model-checking,
tableau and resolution.
A large range of benchmark patterns are collected or newly proposed.
They find that no solver,
no one of the three approaches,
dominates the others.
The tableau tool ``pltl''
based on Schwendimann's approach
is the best of the tableaux
and the best overall on
various classes of pattern.
A portfolio solver is suggested and also evaluated.

The benchmark formulas,
rendered in a selection of different formats,
is available from Schuppan's webpage.

\subsection{So what is novel here?}

The two PRUNE rules are novel.
They force construction of branches to be terminated in certain circumstances.
They depend only on the labels at the node and the labels of its ancestors.
In general they may allow a label to be repeated some number of times
before the termination condition is met.

The overall tableau shape is novel.
Although tableaux are traditionally tree-shaped,
no other tableau system for LTL 
builds graphs that are tree shaped.
Most tableaux for LTL are more complicated graphs.
The Schwendimann approach is close to being tree-shaped
but there are still up-links from non-leaves.

The labels on the tableau are just
sets of formulas from the closure set
of the original formula 
(that is, subformulas and a few others).
Other approaches
(such as Schwendimann's)
require other annotations on nodes.

Overall tableau: one that is in completely traditional style (labels are sets
of formulas), tree shaped tableau construction,
no extraneous recording of calculated values, just looking at the labels.

Completely parallel development of branches.
No communication between different branches.
This promises interesting and useful
parallel implementations.

The reasoning speed seems to be 
(capable of being)
uniquely fast on some important benchmarks.

\section{Syntax and Semantics}
\label{sec:synsem}

We assume a countable set $\atoms$
of propositional atoms,
or atomic propositions.

A (transition) structure
is a triple $(S,R,g)$
with
$S$ a finite set of states,
$R \subseteq S \times S$
a binary relation
called the transition relation
and labelling $g$ tells us which atoms are true
at each state: for each $s \in S$,
$g(s) \subseteq \atoms$.
Furthermore, $R$ is assumed to be total: every state has at least one successor
$\forall x \in S.  \exists y \in S \mbox{ s.t.} (x,y) \in R$

Given a structure $(S,R,g)$
an $\omega$-sequence of states
$\langle s_0, s_1, s_2, ... \rangle$
from $S$ is a {\em fullpath}
(through $(S,R,g)$)
iff
for each $i$,
$(s_i,s_{i+1}) \in R$.
If $\sigma= \langle s_0, s_1, s_2, ... \rangle$
is a fullpath
then
we write
$\sigma_i=s_i$,
$\sigma_{\geq j}= \langle s_j, s_{j+1}, s_{j+2}, ... \rangle$
(also a fullpath).

The (well formed) formulas of LTL
include the atoms and
if $\alpha$ and $\beta$ are formulas then so are
$\neg \alpha$, $\alpha \wedge \beta$,
$X \alpha$, and $\alpha U \beta$.

We will also include some formulas
built using other connectives
that are often presented as abbreviations instead.
However, before detailing them
we present the semantic clauses.

Semantics defines
truth of formulas
on a fullpath through a structure.
Write $M, \sigma \models \alpha$ iff the
formula $\alpha$ is true of the fullpath $\sigma$
in the structure $M=(S,R,g)$ defined recursively by:\\
\begin{tabular}{lll}
$M, \sigma \models p$ & iff & $p \in g(\sigma_0)$, for $p \in {\atoms}$;\\
$M, \sigma \models \neg \alpha$ & iff &
$M, \sigma \not \models \alpha$;\\
$M, \sigma \models \alpha \wedge \beta$ & iff &
$M, \sigma \models \alpha$ and 
$M, \sigma \models \beta$;\\
$M, \sigma \models X \alpha$ & iff &
$M, \sigma_{\geq 1} \models \alpha$; and\\
$M, \sigma \models \alpha U \beta$ & iff &   
there is some $i \geq 0$ s.t.
$M, \sigma_{\geq i} \models \beta$ and\\&&
for each $j$, if $0 \leq j < i$ then
$M, \sigma_{\geq j}  \models \alpha$.\\ 
\end{tabular}

Standard Abbreviations in LTL
include the
classical
$\truth \equiv p \vee \neg p$,
$\falsity \equiv \neg \truth$,
$\alpha \vee \beta \equiv \neg ( \neg \alpha \wedge \neg \beta)$,
$\alpha \rightarrow \beta \equiv \neg \alpha \vee \beta$,
$\alpha \leftrightarrow \beta \equiv
( \alpha \rightarrow \beta ) \wedge ( \beta \rightarrow \alpha)$.
We also have the temporal:
$F \alpha \equiv (\truth U \alpha)$,
$G \alpha \equiv \neg F ( \neg \alpha)$
read as eventually and  always respectively.
In reasoning with
LTL, it is
simpler to remove these abbreviations
from input formulas
and then deal with a relatively small set
tableau rules
for the disabbreviated language.
However, 
experience with
tableaux and typical real life LTL
examples 
gives a strong indication that
automated reasoning is quicker
if these abbreviations
are included as 
first-class language constructs in their own rights.
Thus, inputs are accepted in the
larger language including these
symbols,
they are not disabbreviated
and 
there are enough tableau
rules to
process the bigger set of formulas
directly.
In this paper
we present a tableau system
with the
larger set of rules.

A formula $\alpha$ is {\em satisfiable} iff there
is some structure $(S,R,g)$
with some fullpath
$\sigma$ through it such that
$(S,R,g), \sigma \models \alpha$.
A formula is {\em valid} iff 
for all structures $(S,R,g)$
for all fullpaths
$\sigma$ through $(S,R,g)$ we have
$(S,R,g), \sigma \models \alpha$.
A formula is valid iff its negation 
is not satisfiable.

For example,
$\truth$, $p$, $Fp$, $p \wedge Xp \wedge F \neg p$,
$Gp$
are each satisfiable.
However,
$\falsity$, $p \wedge \neg p$,
$Fp \wedge G \neg p$,
$p \wedge G(p \rightarrow Xp) \wedge F \neg p$
are each not satisfiable.

We will fix a particular formula,
$\phi$ say,
and describe
how a tableau for $\phi$ is built
and
how that decides the
satisfiability
or otherwise, of $\phi$.
We will use other formula names
such as $\alpha$, $\beta$, e.t.c.,
to indicate
arbitrary formulas
which are used in labels
in the tableau for $\phi$.

\section{General Idea of the Tableau}
\label{sec:tab}

The tableau for $\phi$ is a tree of nodes
(going down the page from a root)
each labelled by 
a set of formulas.
To lighten the notation,
when we present a tableau in a diagram
we will omit the braces $\{ \}$
around the sets which form labels.
The root is labelled $\{ \phi\}$.

Each node has 0, 1 or 2 children.
A node is called a leaf if it has 0 children.
A leaf may be
crossed ($\times$),
indicating a failed branch,
or ticked ($\surd$),
indicating a successful branch.
Otherwise, a leaf indicates an unfinished branch
and unfinished tableau.

The whole tableau is successful if there is a ticked branch.
This indicates a ``yes'' answer to the satisfiability of $\phi$.
It is failed if all branches are crossed: indicating ``no''.
Otherwise it is unfinished.
Note that you can stop the algorithm,
and report success if you tick a 
branch even if other branches have not reached
a tick or cross yet.

A small set of tableau rules (see below) determine
whether a node has one or two children
or whether to cross or tick it.
This depends on the label of the parent,
and also, for some rules, on labels
on ancestor nodes,
higher up the branch towards the root.
The rule also determines
the labels on the children.

\newcommand{\horizontalTransition}{$\rightarrow \hspace{-0.45cm} / \hspace{-0.1cm} / \hspace{0.2cm}$}
\newcommand{\verticalTransition}{$\downarrow \hspace{-0.25cm} =$}

The parent-child relation
is indicated by a vertical arrow in diagrams (if needed).
However, to indicate use of one particular rule
(coming up below)
called the TRANSITION rule
we will 
use a vertical arrow (\verticalTransition)
with two strikes across it,
or just an equals sign.

A node label may be the empty set, although
it then can be immediately ticked by rule EMPTY below.

A formula which is
an atomic proposition,
a negated atomic proposition
or of the form
$X \alpha$ or $\neg X \alpha$
is called {\em elementary}.
If a node label is non-empty
and there are no direct contradictions,
that is no $\alpha$ and $\neg \alpha$
amongst the formulas in the label,
and every formula it contains is elementary then
we call the label (or the node) {\em poised}.

\newcommand{\cupdot}{\mathbin{\mathaccent\cdot\cup}}

Most of the rules {\em consume} formulas.
That is,
the parent may have a label
$\Gamma = \Delta \cupdot \{ \alpha \}$,
where
$\cupdot$ is disjoint union,
and a child may have a label
$\Delta \cup \{ \gamma \}$ so that
$\alpha$ has been removed,
or consumed.

See the
$\neg p \wedge X \neg p \wedge (q Up)$
example given in
Figure~\ref{fig:eg9}
of a simple but succesful tableau.

\begin{figure}

\centering
\includegraphics[width=10cm,trim= 5cm 15cm 5cm 4.5cm,clip=true]{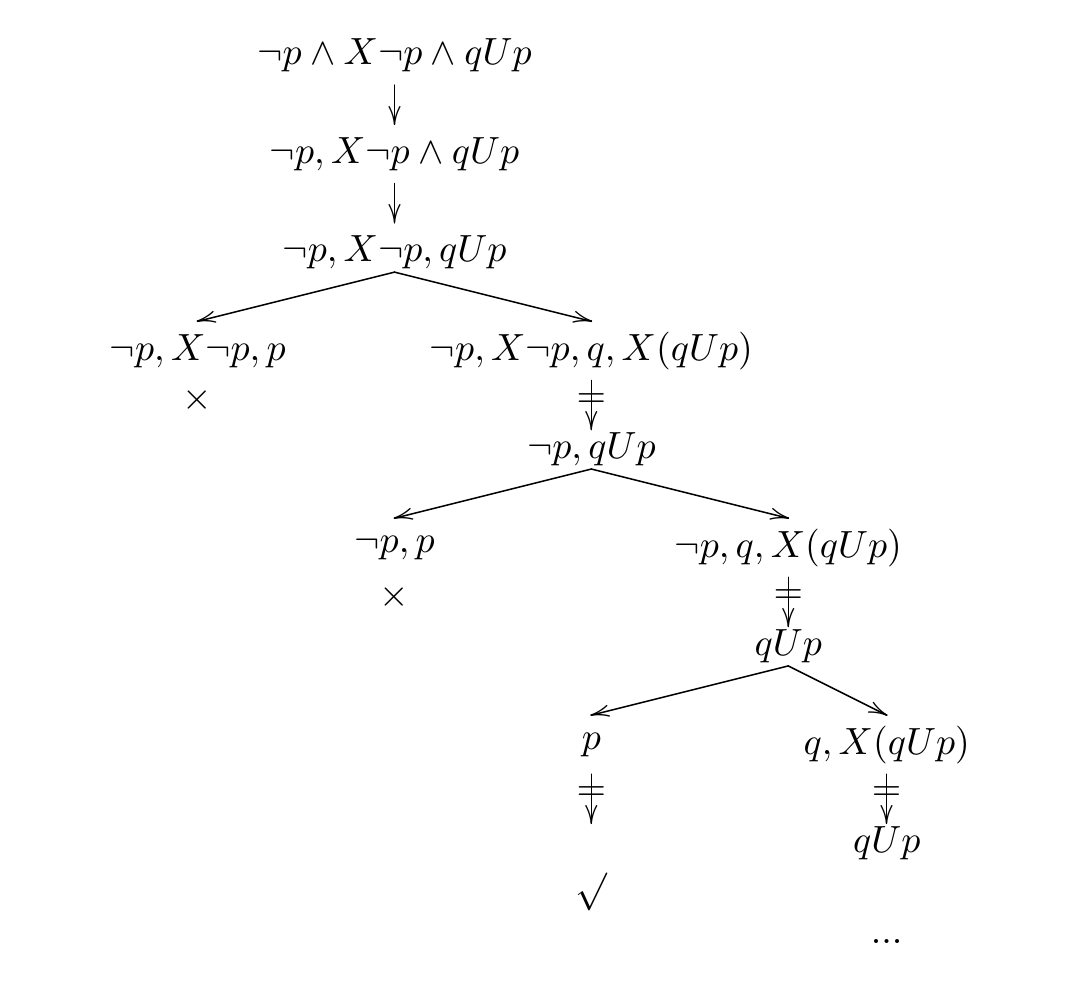}

\caption{$\neg p \wedge X \neg p \wedge (q Up)$}
\label{fig:eg9}
\end{figure}

As usual a tableau node $x$ is an ancestor of a node $y$
precisely when $x=y$ or $x$ is a parent of $y$
or a parent of a parent, etc.
Then $y$ is a descendent of $x$
and we write $x \leq y$.
Node $x$ is a proper ancestor of $y$,
written $x < y$, iff
it is an ancestor and $x \neq y$.
Similarly proper descendent.
When we say that a node $y$ is between
node $x$ and its descendent $z$,
$x \leq y \leq z$, then we mean that
$x$ is an ancestor of $y$ and
$y$ is an ancestor of $z$.

A formula of the form $X(\alpha U \beta)$ 
or $X F \beta$ appearing in a poised
label of a node $m$, say, also plays an important role.
We will call such a formula
an {\em $X$-eventuality} because
$\alpha U \beta$ or $F \beta$ is
often called an {\em eventuality},
and its truth depends on $\beta$ being eventually
true in the future (if not present).
If the formula $\beta$ appears 
in the label of a proper descendent node $n$ of $m$
then we say that the $X$-eventuality
at $m$
has been
{\em fulfilled} by $n$ by $\beta$ being satisfied there.
 
\section{Rules:}
\label{sec:rules}

There are twenty-five rules altogether. We would only need ten for the 
minimal LTL-language, but recall that
we are treating the usual abbreviations as
first-class symbols, so they each need
a pair of rules.

Most of the rules are what we call {\em static} rules.
They tell us about 
formulas that may be true at a single state in a model.
They 
determine the number,
$0$, $1$ or $2$, of child nodes
and the labels on those nodes
from the label on the current parent node
without reference to any other labels.
These rules are unsurprising
to anyone familiar with any of the 
previous LTL tableau approaches.

To save repetition of wording we use
an abbreviated notation for presenting
each rule: the rule $A / B$
relates the parent label $A$
to the child labels $B$.
The parent label
is a set of formulas.
The child labels
are given as either
a $\tick$ representing the leaf of a successful branch, 
a $\cross$ representing the leaf of a failed branch, a 
single set being the label on the single child
or a pair of sets
separated by a vertical bar $|$ being the respective
labels on a pair of child nodes.

Thus, for example, the $U$-rule, means
that if node is labelled 
$\{ \alpha U \beta \} \cupdot \Delta$,
if we choose to use the
$U$-rule and
if we choose to decompose $\alpha U \beta$ using the rule
then
the node will have two children labelled
$\Delta  \cup \{  \beta\}$
and 
$\Delta  \cup \{ \alpha,  X(\alpha U \beta)\}$
respectively.

Often, several different rules may be applicable to a
node with a certain label.
If another applicable rule is chosen,
or another formula is chosen to be decomposed
by the same rule,
then the child labels may be different.
We discuss this non-determinism later.

These are the positive static rules:\\
\begin{tabular}{ll}
{EMPTY-rule:} &
$\{  \}  \;/\; \tick$.\\
{$\truth$-rule:} &
$\{ \truth \} \cupdot \Delta \;/\; \Delta$.\\
{$\falsity$-rule:} &
$\{ \falsity \} \cupdot \Delta \;/\; \times$.\\
{$\wedge$-rule:} &
$\{ \alpha \wedge \beta \} \cupdot \Delta  \;/\;  
( \Delta  \cup \{  \alpha, \beta \} )$.\\
{$\vee$-rule:} &
$\{ \alpha \vee \beta \} \cupdot \Delta  \;/\;  
( \Delta  \cup \{  \alpha \}
\; |\; \Delta  \cup \{ \beta \} )$.\\
{$\rightarrow$-rule:} &
$\{ \alpha \rightarrow \beta \} \cupdot \Delta  \;/\;  
( \Delta  \cup \{  \neg \alpha \} 
\; |\;  \Delta  \cup \{ \beta \})$.\\
{$\leftrightarrow$-rule:} &
$\{ \alpha \leftrightarrow \beta \} \cupdot \Delta  \;/\;  
( ( \Delta  \cup \{  \alpha, \beta \} \; |\; 
\Delta  \cup \{ \neg \alpha, \neg \beta \} )) $.\\
{$U$-rule:} &
$\{  \alpha U \beta  \} \cupdot \Delta  \;/\;  
( \Delta  \cup \{  \beta \} \; |\; 
\Delta  \cup \{ \alpha, X( \alpha U \beta) \}) $.\\
{$F$-rule:} &
$\{ F \alpha \} \cupdot \Delta  \;/\;  
( \Delta  \cup \{  \alpha \} \; |\; 
\Delta  \cup \{ XF \alpha \}) $.\\
{$G$-rule:} &
$\{ G\alpha \} \cupdot \Delta  \;/\;  
\Delta  \cup \{ \alpha, XG \alpha\}$.\\
\end{tabular}

There are also static rules
for negations:\\
\begin{tabular}{ll}
{CONTRADICTION-rule:} &
$\{ \alpha, \neg \alpha \} \cupdot \Delta \;/\; \times$.\\
{$\neg \neg$-rule:} &
$\{ \neg \neg \alpha \} \cupdot \Delta \;/\; \Delta \cup \{ \alpha \}$.\\
{$\neg \falsity$-rule:} &
$\{ \neg \falsity \} \cupdot \Delta \;/\; \Delta$.\\
{$\neg \truth$-rule:} &
$\{ \neg \truth \} \cupdot \Delta \;/\; \times$.\\
{$\neg \wedge$-rule:} &
$\{ \neg (\alpha \wedge \beta) \} \cupdot \Delta  \;/\;  
( \Delta  \cup \{  \neg \alpha \}
\; |\; \Delta  \cup \{ \neg \beta \} )$.\\
{$\neg \vee$-rule:} &
$\{ \neg (\alpha \vee \beta) \} \cupdot \Delta  \;/\;  
( \Delta  \cup \{  \neg \alpha, \neg \beta \} )$.\\
{$\neg \rightarrow$-rule:} &
$\{ \neg (\alpha \rightarrow \beta) \} \cupdot \Delta  \;/\;  
( \Delta  \cup \{  \alpha, \neg \beta \})$.\\
{$\neg \leftrightarrow$-rule:} &
$\{ \neg ( \alpha \leftrightarrow \beta) \} \cupdot \Delta  \;/\;  
( ( \Delta  \cup \{  \alpha, \neg \beta \} \; |\; 
\Delta  \cup \{ \neg \alpha, \beta \} )) $.\\
{$\neg U$-rule:} &
$\{  \neg ( \alpha U \beta)  \} \cupdot \Delta  \;/\;  
( \Delta  \cup \{  \neg \alpha, \neg \beta \} \; |\; 
\Delta  \cup \{ \neg \beta, X \neg ( \alpha U \beta) \}) $.\\
{$\neg G$-rule:} &
$\{ \neg G \alpha \} \cupdot \Delta  \;/\;  
( \Delta  \cup \{  \neg \alpha \} \; |\; 
\Delta  \cup \{ X \neg G \alpha \}) $.\\
{$\neg F$-rule:} &
$\{ \neg F \alpha \} \cupdot \Delta  \;/\;  
\Delta  \cup \{ \neg \alpha, X \neg F \alpha\}$.\\
\end{tabular}

The remaining four non-static rules are only applicable
when a label is poised
(which implies that none of the static rules will
be applicable to it).
In presenting them we 
use the convention that
a node $u$ has label $\Gamma_u$.
The rules are to be considered 
in the following order.

{\bf [LOOP]:}
If a node $v$ 
with poised label $\Gamma_v$
has a 
proper ancestor
(i.e. not itself)
$u$ with poised label $\Gamma_u$ such that
$\Gamma_u \supseteq \Gamma_v$,
and for each $X$-eventuality
$X(\alpha U \beta)$ or $XF \beta$ in $\Gamma_u$
we have 
 a node $w$
such that $u < w \leq v$
and $\beta \in \Gamma_w$
then
$v$ can be a ticked leaf.

{\bf [\prune]:}
Suppose that
$u < v < w$
and each of
$u$, $v$ and $w$
have the same poised label 
$\Gamma$.
Suppose also that
for each $X$-eventuality $X(\alpha U \beta)$
or $XF \beta$ in $ \Gamma$,
if there is $x$ with
$\beta \in \Gamma_x$
and $v < x \leq w$
then
there is $y$ such that
$\beta \in \Gamma_y$
and $u < y \leq v$.
Then $w$ can be a crossed leaf.

{\bf [\prunez]:}
Suppose that
$u < v$
share the same poised label $\Gamma$.
Suppose also that 
$\Gamma$ contains at least one $X$-eventuality
but there is 
no $X$-eventuality $X(\alpha U \beta)$ or $XF \beta$ in $\Gamma$,
with a node $x$
such that
is $\beta \in \Gamma_x$
and $u < x \leq v$.
Then $v$ can be a crossed leaf.

{\bf [TRANSITION]:}
If none of the other rules above do apply to it
then
a node labelled by 
poised $\Gamma$ say,
can have one child
whose label is: 
$\Delta = \{ \alpha |
X \alpha \in \Gamma \}
\cup
\{ \neg \alpha |
\neg X \alpha \in \Gamma \}
$.

\section{Comments and Motivation}
\label{sec:motiv}

A traditional classical logic style tableau
starts with the formula
in question and breaks it down into
simpler formulas
as we move down the page.
The simpler formulas
being satisfied should ensure that the
more complicated parent label
is satisfied.
Alternatives are presented as branches.
See the
example given in
Figure~\ref{fig:eg4}.

\begin{figure}
\centering
\begin{minipage}{.5\textwidth}
  \centering
\includegraphics[width=8cm,trim= 5cm 20cm 5cm 4.5cm,clip=true]{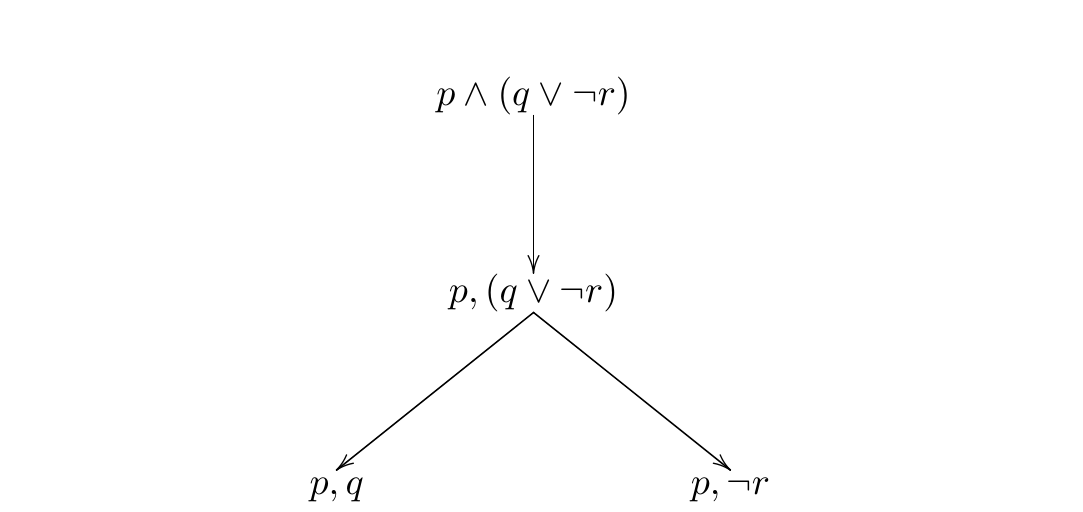}
  \captionof{figure}{Classical disjunction}
  \label{fig:eg4}
\end{minipage}%
\begin{minipage}{.5\textwidth}
  \centering
\includegraphics[width=8cm,trim= 5cm 19cm 5cm 4.5cm,clip=true]{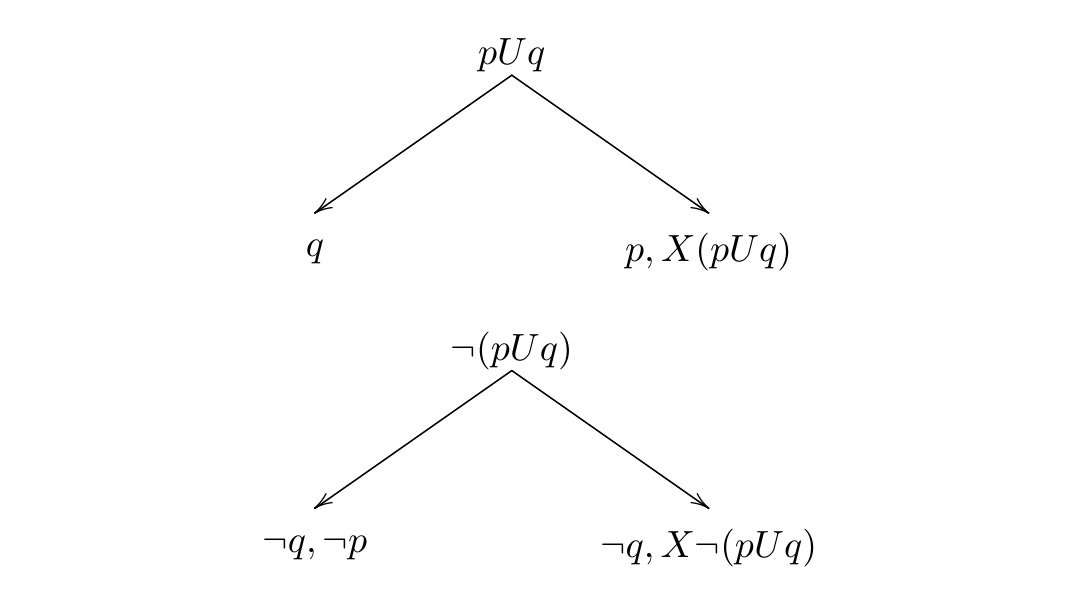}
  \captionof{figure}{Until also gives us choices}
  \label{fig:eg6}
\end{minipage}
\end{figure}

We follow this style of tableau
as is evident by the
classical look of the tableau rules
involving classical connectives.
The $U$ and $\neg U$ rules
are also in this vein,
noting that temporal formulas
such as 
$U$ also gives us choices:
Figure~\ref{fig:eg6}.

Eventually, we break down a formula
into elementary ones.
The atoms and their negations
can be satisfied immediately
provided there are no contradictions,
but to reason about the $X$ formulas
we need to move forwards in time.
How do we do this?
\longversion{
See Figure~\ref{fig:eg7}.

\begin{figure}
\centering
\begin{minipage}{.5\textwidth}
  \centering
\includegraphics[width=8cm,trim= 5cm 18cm 5cm 4.5cm,clip=true]{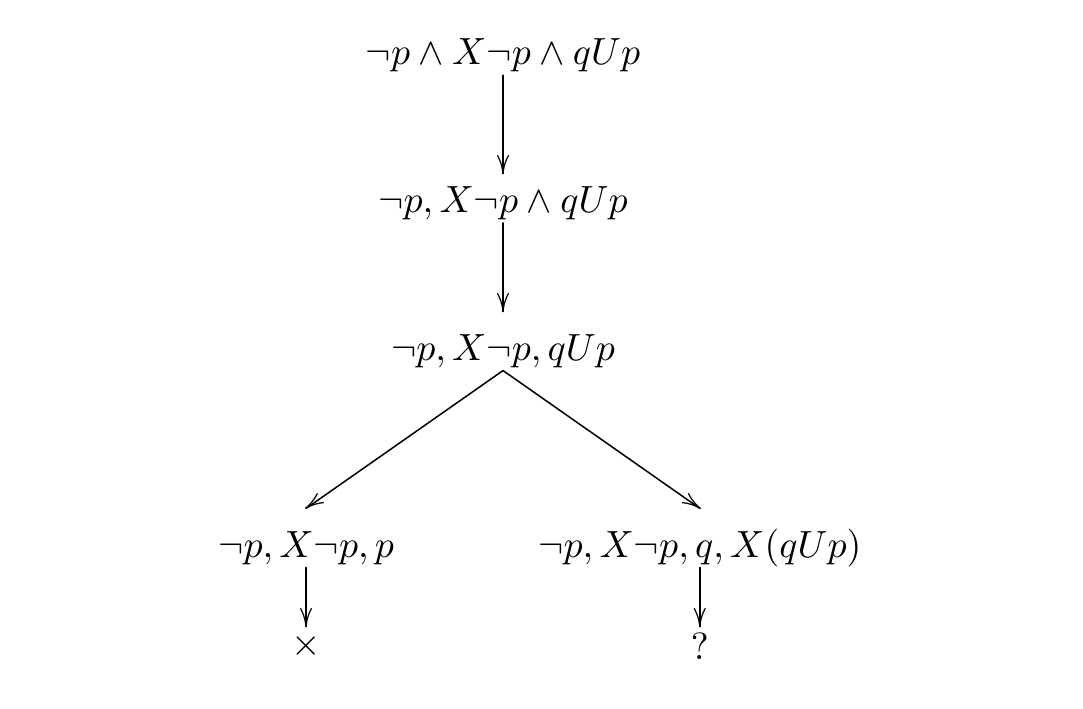}
  \captionof{figure}{But what to do when we want to move forwards in time?}
  \label{fig:eg7}
\end{minipage}%
\begin{minipage}{.5\textwidth}
  \centering
\includegraphics[width=8cm,trim= 5cm 18cm 5cm 4.5cm,clip=true]{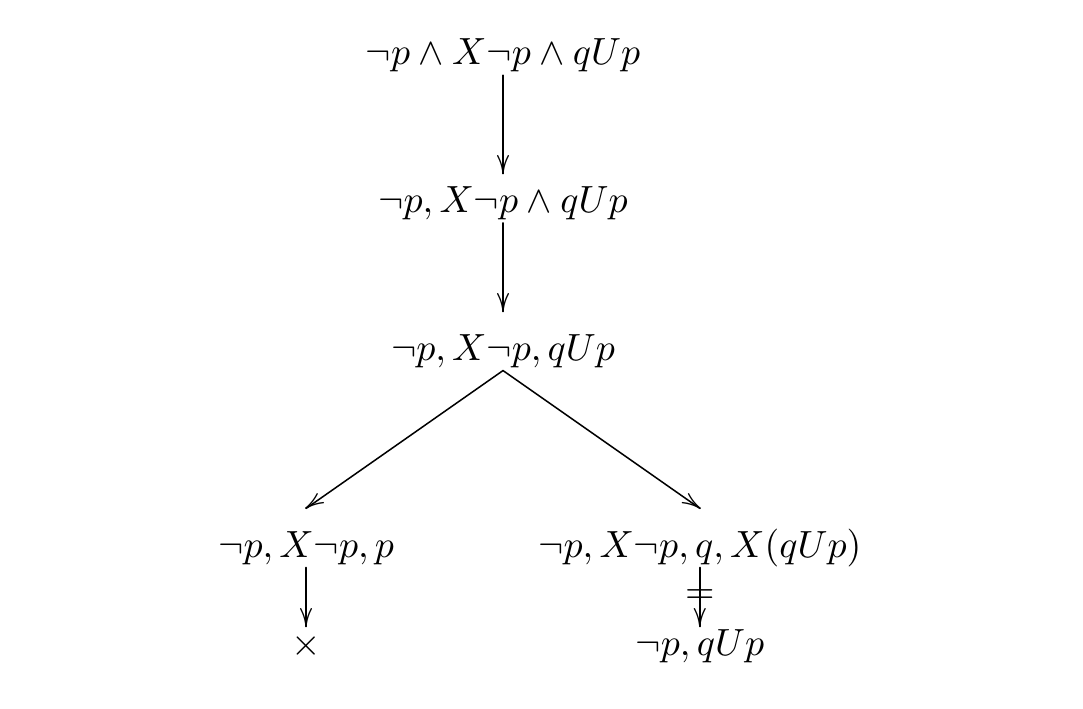}
  \captionof{figure}{Introduce a new type of TRANSITION}
  \label{fig:eg8}
\end{minipage}
\end{figure}

The answer is that we introduce a new type of TRANSITION step:
see
Figure~\ref{fig:eg8}.
Reasoning switches to the next time point
and we carry over only information
nested below $X$ and $\neg X$.

}

\shortversion{
The answer is that we introduce a new type of TRANSITION step
at this stage:
see
Figure~\ref{fig:eg9}.
Reasoning switches to the next time point
and we carry over only information
nested below $X$ and $\neg X$.

}

 With just these rules
we can now do the whole
$\neg p \wedge X \neg p \wedge (q Up)$
example. See
Figure~\ref{fig:eg9}.

This example is rather simple, though,
and we need additional rules
to deal with infinite behaviour.
 Consider the example
$Gp$
which,
in the absence of additional rules, gives rise to a very repetitive infinite tableau.
Figure~\ref{fig:eg10}.
\begin{figure}
\centering
\begin{minipage}{.5\textwidth}
  \centering
\includegraphics[width=10cm,trim= 7cm 18cm 5cm 4.5cm,clip=true]{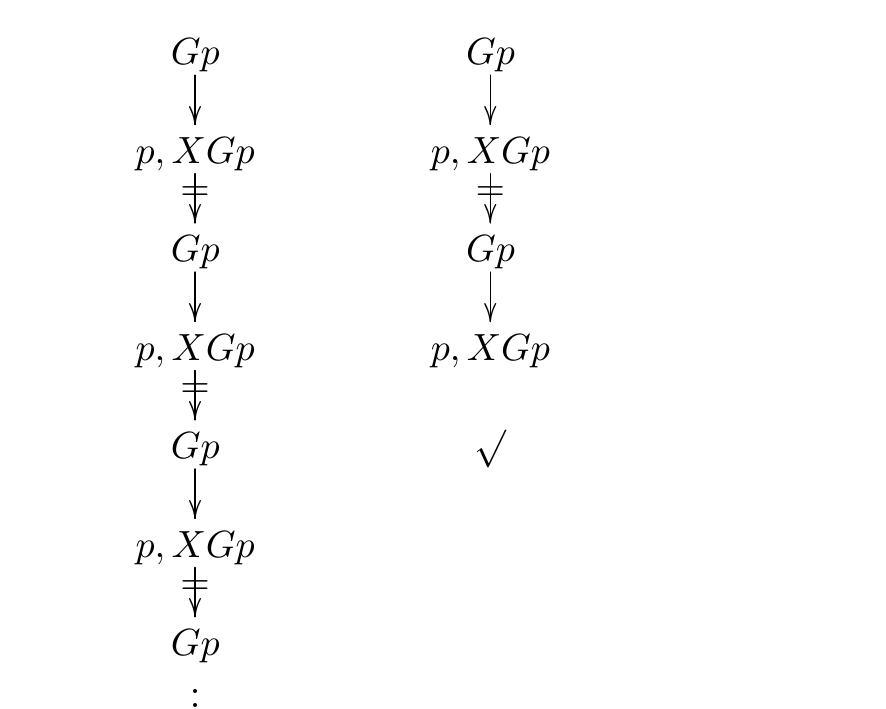}
  \captionof{figure}{$Gp$ gives rise to a very repetitive infinite tableau without the LOOP rule,
  but succeeds quickly with it}
  \label{fig:eg10}
\end{minipage}%
\begin{minipage}{.5\textwidth}
  \centering
\includegraphics[width=10cm,trim= 5cm 16cm 1cm 4.5cm,clip=true]{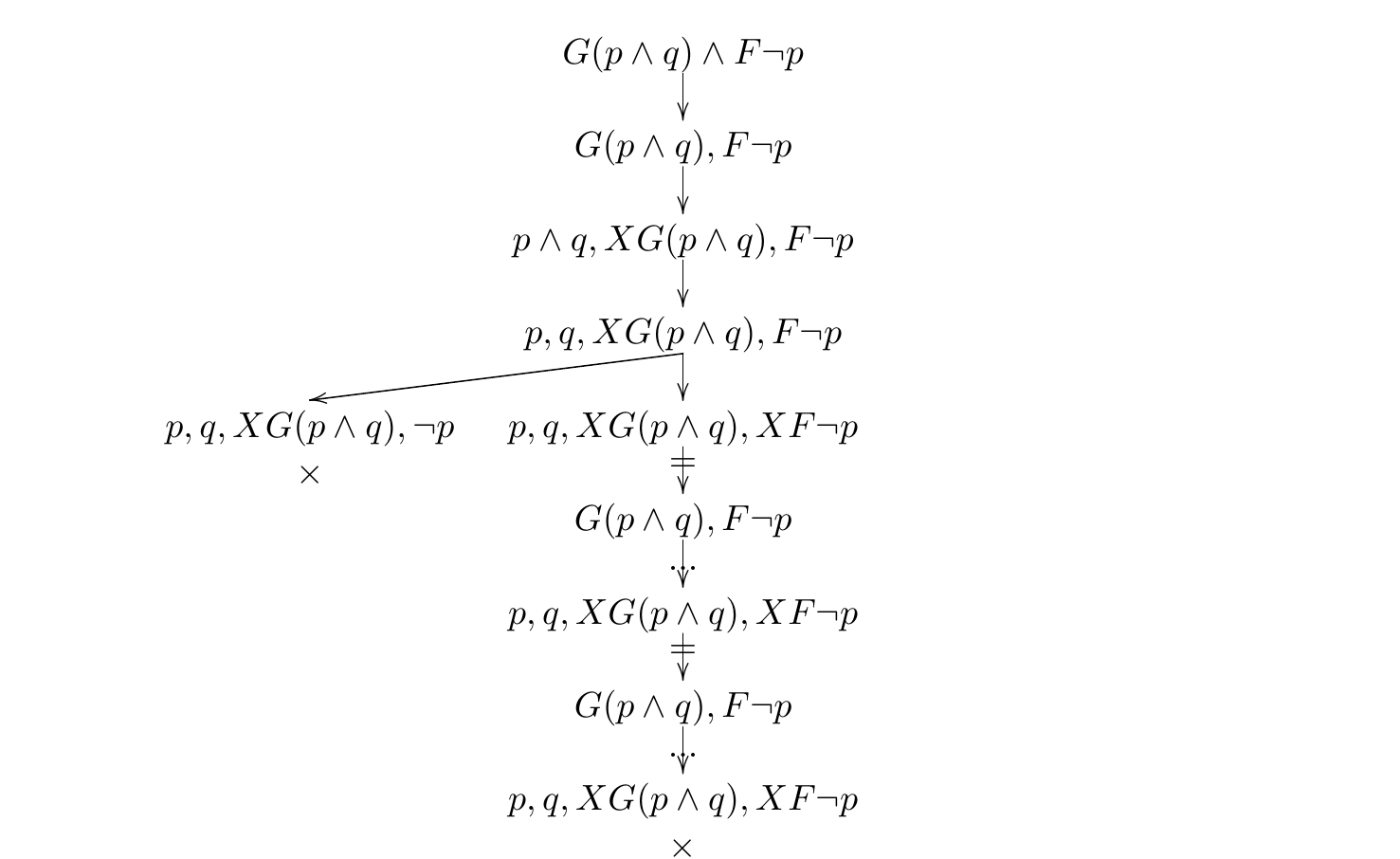}
  \captionof{figure}{$G(p \wedge q) \wedge F \neg p$ crossed by PRUNE}
  \label{fig:eg11}
\end{minipage}
\end{figure}
Notice that the infinite fullpath that it suggests is a model for $Gp$
as would be a fullpath just consisting of the one state
with a self-loop (a transition from itself to itself).

This suggests that we should allow
the tableau branch construction to halt if a
state is repeated.
However the
example $G(p \wedge q) \wedge F \neg p$ shows
that we can not just accept infinite loops
as demonstrating satisfiability:
the tableau for this
unsatisfiable formula would have an 
infinite branch if we did
not use the PRUNE rule
to cross it
(Figure~\ref{fig:eg11}).
Note that the more specialised \prunez\ rule
can be used to cross the branch
one TRANSITION earlier.

Notice that the infinite fullpath that 
the tableau suggests is this time
not a model for $G(p \wedge q) \wedge F \neg p$.
Constant repeating of
$p, q$ being made true does not
satisfy the conjunct
$F \neg p$.
We have postponed the {\em eventuality}
forever
and this is not acceptable.

If $\alpha U \beta$ appears in 
the tableau label of a node $u$
then we want
$\beta$ to appear in the label of 
some later (or equal node) $v$.
In that case we say that the eventuality
is {\em satisfied}
by $v$.

Eventualities are eventually satisfied
in any (actual) model of a formula:
by the semantics of $U$.

Thus we
introduce the LOOP
rule with an
extra condition.
If a label is repeated along a branch
and all eventualities
are satisfied in between
then
we can build a model
by looping states.
In fact, the ancestor can have a superset
and it will work (see the soundness proof below).

Examples like $G(p \wedge q) \wedge F \neg p$
(in Figure~\ref{fig:eg11})
and
$p \wedge G(p \rightarrow Xp)
\wedge F \neg p$
which have branches that go on forever
without satisfying eventualities,
still present a problem for us.
We need to stop and fail branches
so that we can answer ``no'' correctly and terminate
and so that we do not get distracted 
when another branch may be successful.
In fact, no infinite branches should be allowed.

The final rule that we
consider,
and the most novel,
is based on
observing that
these infinite branches
are just getting repetitive without
making a model.
The repetition
is clear because
there are only a finite set of 
formulas
which can ever appear in labels
for a given initial formula $\phi$.
\longversion{
The {\em closure set} for a formula $\phi$
is as follows:
\[
\{
\psi, \neg \psi | \psi \leq \phi \}
\cup 
\{ X (\alpha U \beta),
\neg X (\alpha U \beta) |
\alpha U \beta \leq \phi 
\}
\]
Here we use $\psi \leq \phi$ to mean that
$\psi$ is a subformula
of $\phi$.
The size of closure set is
$\leq 4n$
where $n$ is the length of 
the initial formula.
Only formulas
from this finite set
will appear in labels.
So there are only $\leq 2^{4n}$
possible labels.
}

A similar observation
in the case of the branching time temporal logic CTL*
suggested the idea
of {\em useless}
intervals on branches
in the tableau in \cite{Rey:startab}.
It is also related to the 
proof of the small model theorem
for LTL in \cite{SiC85}.

The PRUNE rule is as follows.
If a node at the end of a branch
(of a partially complete tableau)
has a label which has 
appeared already twice above,
and
between the second and third
appearance 
there are no new eventualities satisfied
then 
that whole interval of states
has been useless.
\longversion{The \prunez\ rule
applies similar reasoning to an initial
repeat in which no eventualities are fulfilled.
}
 
It should
be mentioned that the
tableau building process
we describe 
above
is non-deterministic in several respects
and so
really not a proper description of an algorithm.
However, we will see in the
correctness proof below
that the further details 
of which formula to
consider at each step
in building the 
tableau are unimportant.

Finally a suggestion for a nice
example to try.
Try
$p \wedge G(p \leftrightarrow X \neg p)
\wedge G(q \rightarrow \neg p)
\wedge G(r \rightarrow \neg p)
\wedge G(q \rightarrow \neg r)
\wedge GFq
\wedge GFr$.

\section{Proof of Correctness: Soundness:}
\label{sec:sound}

Justification will consist of three
parts:
each established
whether or not the optional rules
are used.

Proof of soundness.
If a formula
has a successful tableau then
it has a model.

Proof of completeness:
If a formula has a model then
building a tableau
will be successful.

Proof of termination.
Show that the 
tableau building
algorithm will always terminate.

First, a sketch of the Proof of Termination.
Any reasonable tableau search
algorithm will always terminate 
because
there can be no infinitely long branches.
We know this because the
LOOP and \prune\ rule will
tick or
cross any that go on too long.
Thus there will either
be at least one tick or all crosses.
Termination is also why we require that
static rules consume formulas
in between 
TRANSITION rules.


Now soundness.
We use a successful tableau to make a model
of the formula,
thus showing that it is satisfiable.
In fact we just use a successful branch.
Each TRANSITION
as we go down the branch
tells us that we are moving from one state
to the next.
Within a particular state we can make all the
formulas listed true there
(as evaluated along the rest of the fullpath).
Atomic propositions listed tell us that they
are true at that state.
An induction deals with most of the rest of the
formulas.
Eventualities either get satisfied and disappear
in a terminating branch
or have to be satisfied 
if the branch is ticked by the LOOP rule.

Suppose that $T$ is a successful tableau for $\phi$.
Say that the branch
$b = \langle x_0, x_1, x_2, ..., x_n \rangle$ of nodes
of $T$ ends in a tick.
Denote by $\Gamma(u)$, the
tableau label on a 
node $u$.
We build $(S,R,g)$ 
from $b$ and its tableau labels.

In fact, there are only a few
$x_i$ that really matter:
each time when we are about to use TRANSITION
and when we are about to use EMPTY or LOOP
to finish (at $x_n$).
Let $j_0, j_1, j_2, ..., j_{k-1}$ be the  indices of nodes
from $b$ 
at which the TRANSITION rule is used.
That is, the TRANSITION rule is used 
to get from $x_{j_i}$ to $x_{j_{i}+1}$.
See
Figure~\ref{fig:eg13}.

\begin{figure}
\centering
\includegraphics[width=8cm]{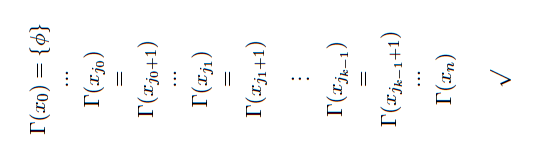}

\caption{TRANSITIONS in $b$: view sideways}
\label{fig:eg13}
\end{figure}

If $b$ ends in a tick from EMPTY
then
let $S= \{ 0, 1, 2, ..., k \}$:
so it contains $k+1$ states.
It is convenient to consider
that $j_k=n$ in the EMPTY case
and put
$\Gamma(x_{j_k}) = \{ \}$.
The states will correspond to
$x_{j_0}, x_{j_1}, ..., x_{j_{k-1}}, x_{j_k}$.

See
Figure~\ref{fig:eg15}.

\begin{figure}
\centering
\includegraphics[width=10cm,trim= 5cm 22.5cm 5cm 4.5cm,clip=true]{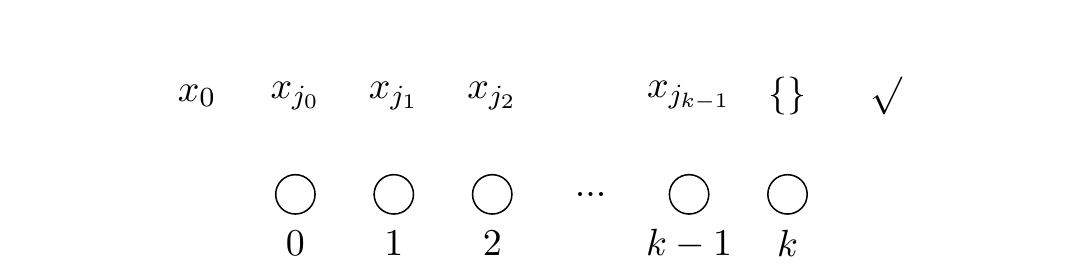}

\caption{$b$ ends in a tick from EMPTY}
\label{fig:eg15}
\end{figure}

If $b$ ends in a tick from LOOP
then
let $S= \{ 0, 1, 2, ..., k-1 \}$:
so it contains $k$ states.
These will correspond to
$x_{j_0}, x_{j_1}, ..., x_{j_{k-1}}$.

See
Figure~\ref{fig:eg16}.

\begin{figure}
\centering
\includegraphics[width=8cm,trim= 4.5cm 21.5cm 4.5cm 4.5cm,clip=true]{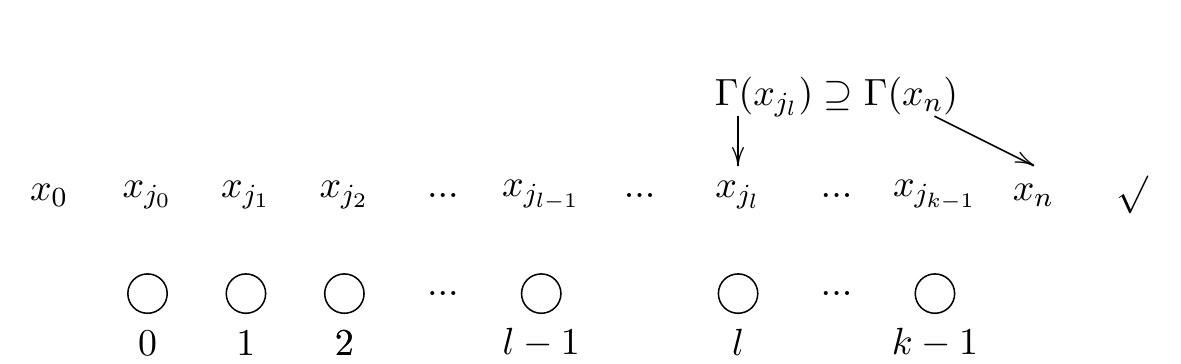}

\caption{$b$ ends in a tick from LOOP}
\label{fig:eg16}
\end{figure}

Let $R$ contain each $(i,i+1)$ for $i<k-1$.
We will also add extra pairs to $R$ to
make a fullpath.

If $b$ ends in a tick from EMPTY
then 
just put 
$(k-1,k)$ and a self-loop $(k,k)$
in $R$ as well.
See
Figure~\ref{fig:eg14}.

\begin{figure}
\centering
\includegraphics[width=10cm,trim= 5cm 22cm 5cm 4.5cm,clip=true]{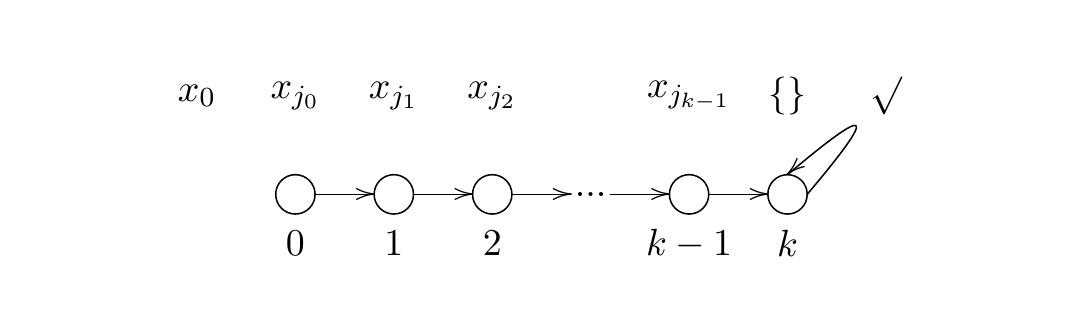}

\caption{$b$ ends in a tick from EMPTY}
\label{fig:eg14}
\end{figure}

If $b$ ends in a tick from LOOP
then 
just put 
$(k-1,l)$ 
in $R$ as well
where $l$ is as follows.
Say that $x_{m}$ is the state that
``matches'' $x_n$.
So look at the application of the LOOP rule that
ended $b$ in a tick.
There is a proper ancestor $x_{m}$ of $x_n$
in the tableau with
$\Gamma(x_{m}) \supseteq \Gamma(x_n)$
and all eventualities
in $\Gamma(x_{m})$ are cured
between $x_{m}$ and $x_n$.
The rule requires
$x_{m}$ to be poised
so it is just before
 a TRANSITION rule.
 So say that $m=j_l$.
Put $(k-1,l) \in R$.
See
Figure~\ref{fig:eg17}.

\begin{figure}
\centering
\includegraphics[width=8cm,trim= 4.5cm 21cm 4.5cm 5cm,clip=true]{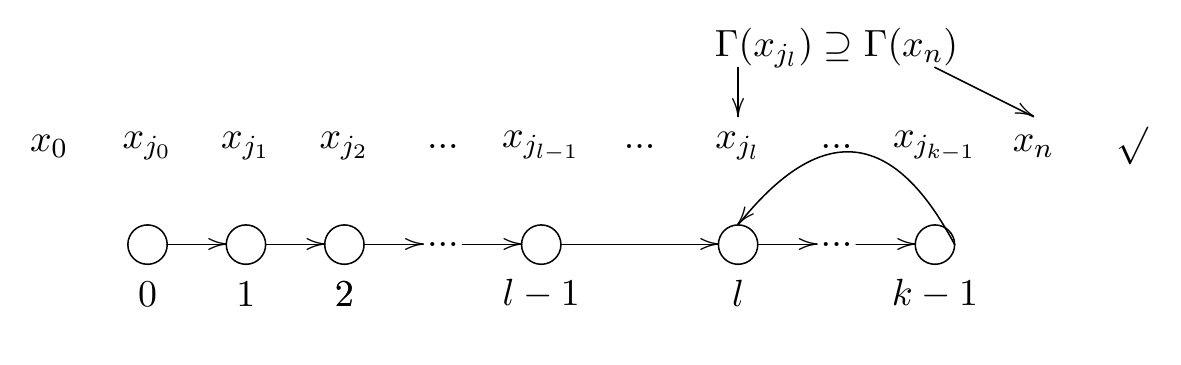}

\caption{$m=j_l$}
\label{fig:eg17}
\end{figure}

A model with a 
line and one loop back is sometimes
called a {\em lasso}
\cite{SiC85}.

Now let us define the labelling $g$ of states by atoms
in $(S,R,g)$.
Let $g(i)= \{ p \in \atoms | p \in \Gamma(x_{j_i}) \}$.

Finally our proposed model of $\phi$ is
along the only fullpath $\sigma$
of $(S,R,g)$ that starts at $0$.
That is,
if $b$ ends in a tick from EMPTY
then 
$\sigma = \langle  0, 1, 2, ..., k-1, k, k ,k ,k , ... \rangle$
while
if $b$ ends in a tick from LOOP
then 
$\sigma = \langle  0, 1, 2, ..., k-2, k-1, l, l+1, l+2, ..., k-2, k-1, l, l+1 , ... \rangle$.

Let $N$ be the length of the first (non-repeating) part of the model:
in the EMPTY case
$N=k-1$ and in the
LOOP case
$N=l$.
Let $M$ be the length of the repeating part:
in the EMPTY case
$M=1$ and in the
LOOP case
$N=k-l$.
So in either case the model has
$N+M$ states $\{ 0, 1, ..., N+M-1\}$
with state $N$ coming (again) after state $N+M-1$ etc.
In particular,
$\sigma_i=i$ for $i<N$
and
$\sigma_i=(i-N)\bmod M +N$
otherwise.

Now we are going to define a set $\Delta_i$ of formulas
for each $i=0, 1, 2, ....$,
that we will want to be satisfied at $\sigma_i$.
They collect formulas in labels in between 
TRANSITIONS, and
loop on forever.
Thus $\Delta_0$ is to be
a set of formulas that 
we want to be true at the first state of the model
and
$\Delta_N$
those true when the model starts to repeat.
In the very special case of
$N=0$,
when $0$ is the first repeating state,
let
$\Delta_0 = \bigcup_{s \leq j_0} \Gamma(x_{s})
\cup
\bigcup_{j_{k-1} < s \leq n} \Gamma(x_{s})$.
If $N>0$ then
let the collection for the first repeating state be
$\Delta_N
= \bigcup_{j_{N-1} < s \leq j_N} \Gamma(x_{s})
\cup
\bigcup_{j_{k-1} < s \leq n} \Gamma(x_{s})$.
So in either case,
$\Delta_N$ has formulas
from two separate sections
of the tableau.

If $N>0$, for $i=0$,
put
$\Delta_0= \bigcup_{s \leq j_0} \Gamma(x_{s})$.
For each $i=1, 2, ..., N+M-1$,
except $i=N$,
let 
$\Delta_i= \bigcup_{j_{i-1} < s \leq j_j} \Gamma(x_{s})$,
all the formulas that
appear between the $(i-1)th$
and $i$th consecutive uses
of the TRANSITION rule.

Finally,
for all $i \geq N+M$,
put
$\Delta_i= \Delta_{i-M}
= \Delta_{(i-N) \bmod M + N}$.

\begin{lemma}
\label{lem:hl2}
If $X \alpha \in \Delta_i$ 
for some $i$,
then 
$\alpha \in \Delta_{i+1}$.
Also,
if $\neg X \alpha \in \Delta_i$ 
for some $i$,
then 
$\neg \alpha \in \Delta_{i+1}$.
\end{lemma}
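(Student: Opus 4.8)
The plan is to exploit two structural facts about the construction along the ticked branch $b$: that every formula of the form $X\alpha$ or $\neg X\alpha$ is \emph{elementary}, and that such formulas are precisely the ones acted on by the TRANSITION rule. First I would note that each $\Delta_i$, for $i$ in the base range $0 \le i \le N+M-1$, is assembled from the labels of a single ``block'' of $b$: the nodes lying (strictly) after one use of TRANSITION, up to and including the poised node $x_{j_i}$ at which the next TRANSITION fires. Inside such a block only static rules are applied, since a LOOP or PRUNE application would have terminated the branch whereas $b$ continues; and each static rule copies its side set $\Delta'$ -- everything other than the single principal formula being decomposed -- unchanged into whichever child lies on $b$.

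The core is then a persistence-plus-transition chain. Suppose $X\alpha \in \Delta_i$. Then $X\alpha \in \Gamma(x_s)$ for some node $x_s$ of block $i$. Being elementary, $X\alpha$ can never be the principal formula of a static rule, so at every static step from $x_s$ down to the poised node $x_{j_i}$ it belongs to the carried-over set $\Delta'$; an induction on the number of those steps gives $X\alpha \in \Gamma(x_{j_i})$. The edge out of the poised node $x_{j_i}$ is a TRANSITION, whose child label is $\{\beta \mid X\beta \in \Gamma(x_{j_i})\} \cup \{\neg\beta \mid \neg X\beta \in \Gamma(x_{j_i})\}$, so $\alpha \in \Gamma(x_{j_i+1})$. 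Since $x_{j_i+1}$ is the first node of the next block, $\Gamma(x_{j_i+1}) \subseteq \Delta_{i+1}$, hence $\alpha \in \Delta_{i+1}$. The negated case runs identically: $\neg X\alpha$ is also elementary, persists to $x_{j_i}$, and the same TRANSITION clause yields $\neg\alpha \in \Gamma(x_{j_i+1}) \subseteq \Delta_{i+1}$.

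The one delicate point, which I expect to be the main obstacle, is matching this up with the periodic definition $\Delta_i = \Delta_{(i-N)\bmod M + N}$ at the wrap-around, that is, proving the claim when the true successor of the index is the loop target $N$ rather than $i+1$. Here the extra component $\bigcup_{j_{k-1} < s \le n}\Gamma(x_s)$ of $\Delta_N$ (and of $\Delta_0$ when $N=0$) does the work. For $i = N+M-1 = k-1$ the final TRANSITION fires at $x_{j_{k-1}}$, so the persistence argument places $\alpha$, or $\neg\alpha$, in $\Gamma(x_{j_{k-1}+1})$, and $x_{j_{k-1}+1}$ lies in exactly that second component of $\Delta_N$; thus $\alpha \in \Delta_N = \Delta_{N+M}$, which is the block for the state following position $k-1$ on the looped fullpath. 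The EMPTY case needs no special treatment: its terminal state carries the empty label and so generates no $X$-obligation, and its self-loop is vacuous, while the single transition into it is covered by the generic block argument. Finally, periodicity of the $\Delta_i$ propagates the base-range statement (together with the wrap-around step) to every $i$, giving the lemma.
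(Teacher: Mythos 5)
Your generic persistence-plus-TRANSITION argument, your wrap-around case $i=N+M-1$, and your dismissal of the EMPTY case all match the paper's proof. But there is a genuine gap: you never handle the case $i=N$ when $X\alpha$ enters $\Delta_N$ through its \emph{second} component $\bigcup_{j_{k-1}<s\leq n}\Gamma(x_s)$, i.e.\ when $X\alpha$ occurs in a label of the final segment of the branch, strictly after the last TRANSITION and up to the LOOP-ticked leaf $x_n$. Your opening claim that every $\Delta_i$ in the base range is assembled from a single block ending in a TRANSITION is false precisely for $i=N$, and in your proposal the second component is only ever used as a \emph{destination} (the place where $\alpha$ lands in your $i=k-1$ case), never as a \emph{source} of $X$-formulas. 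For a formula $X\alpha$ arising there, your persistence argument pushes it forward only as far as $\Gamma(x_n)$, and there is no TRANSITION below $x_n$ to convert it into $\alpha$: the chain stops dead at the ticked leaf, and nothing in your argument says anything about $\Delta_{N+1}$.

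Closing this case requires an ingredient you never invoke anywhere: the applicability condition of the LOOP rule. Since the branch was ticked by LOOP against the ancestor $x_{j_l}$ with $l=N$, we have $\Gamma(x_n)\subseteq\Gamma(x_{j_l})=\Gamma(x_{j_N})$; because $X\alpha$ is elementary it persists from its occurrence in the final segment down to $\Gamma(x_n)$, hence $X\alpha\in\Gamma(x_{j_N})$, and the TRANSITION fired at $x_{j_N}$ then yields $\alpha\in\Gamma(x_{j_N+1})\subseteq\Delta_{N+1}$, as required. This is the paper's second subcase of $i=N$, and it is the one point in the whole lemma where the LOOP superset condition does any work -- it is exactly what makes the loop-back edge of the constructed model sound -- so omitting it is not cosmetic: for the $X$-formulas carried by the leaf label $\Gamma(x_n)$ itself (which are typically present, e.g.\ $XGp$ in the $Gp$ example), the truth lemma downstream would have no support. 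Your treatment of the EMPTY case is fine for the corresponding reason you hint at: an elementary formula in the final segment would persist into $\Gamma(x_n)=\{\}$, which is impossible, so that subcase is vacuous there.
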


\begin{proof}
Just consider the $X \alpha$ case:
the $\neg X \alpha$ case is similar.

Choose some $i$ such that
$X \alpha \in \Delta_i$.
As the $\Delta_i$s repeat, we may as well assume
$0 \leq i \leq N+M-1$.
There are three main cases:
$i=N$, $i=N+M-1$ or otherwise.

Consider the $i \neq N$ and $i \neq N+M-1$ case first.
Thus $X \alpha$ appears in 
some 
$\Gamma(x_s)$ for
$j_{i-1} < s \leq j_i$.
Because no static rules 
remove them,
any formula
of the form $X \alpha$
will survive in the tableau labels
until the poised label
$\Gamma(x_{j_i})$
just before a TRANSITION rule is used.
After the TRANSITION rule we will have
$\alpha \in 
\Gamma(x_{j_i+1})$
and $\alpha$ will 
be collected in 
$\Delta_{i+1}$, the next state label collection.

However,
in the other case
when $i=N+M-1$,
we have
$X \alpha$ 
surviving to be in the poised label
$\Gamma(x_{j_{N+M-1}})$.
In that case,
$\alpha$ will be 
in $\Gamma(x_{j_{N+M-1}+1})$ and
collected in
$\Delta_{N}$.
But,
when $i=N+M-1$
then $i+1 = N+M$
and so
$\Delta_{i+1}=\Delta_{N+M}
= \Delta_N \ni \alpha$ as required.

Finally,
in the case
when $i=N$,
we may have
$X \alpha$ 
in 
some
$\Gamma(x_s)$ for
$j_{N-1} < s \leq j_N$
or in
some 
$\Gamma(x_s)$ for
$j_{k-1} < s \leq j_n$.
In the first subcase,
it survives until 
$\Gamma(x_N)$
and the reasoning proceeds as above.
In the second subcase,
$X \alpha \in \Gamma(x_s)$ for
some $s$ with $j_{k-1} < s \leq j_n$.
Thus it survives to
be in 
$\Gamma(x_{j_n})$
when we are about to use the LOOP rule.
However, it will then also be in
$\Gamma(x_{j_l}) \supseteq \Gamma(x_n)$.
After the TRANSITION rule
at $x_{j_l}=x_{j_N}$,
$\alpha$ will be in 
$\Gamma(x_{j_N+1})$ and will
be collected in
$\Delta_{N+1}$ as required.
\end{proof}

\begin{lemma}
\label{lem:hl3}
Suppose $\alpha U \beta \in \Delta_i$.
Then there is some $d \geq i$
such that
$\beta \in \Delta_d$
and for all
$f$,
if $i \leq f < d$
then 
$\{ \alpha, \alpha U \beta, X(\alpha U \beta) \}
\subseteq 
\Delta_f$.
\end{lemma}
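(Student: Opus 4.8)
I would prove Lemma~\ref{lem:hl3} by tracing the eventuality $\alpha U \beta$ forward through the state labels $\Delta_i$, using the observation that the only static rule able to consume $\alpha U \beta$ is the $U$-rule, and then invoking the tick that ended the branch to guarantee that $\beta$ is eventually collected. The first step is local: whenever $\alpha U \beta \in \Delta_f$, it occurs in some $\Gamma(x_s)$ inside the block of nodes collected into $\Delta_f$; since $\alpha U \beta$ is non-elementary and static rules consume formulas, it cannot survive into the poised label at which the next TRANSITION is applied, so the $U$-rule must be applied to an occurrence of it within the block. On the fixed branch $b$ this application takes exactly one child, so either \textbf{(a)} $\beta$ is added, giving $\beta \in \Delta_f$, or \textbf{(b)} $\alpha$ and $X(\alpha U \beta)$ are added, giving $\{\alpha, X(\alpha U \beta)\} \subseteq \Delta_f$. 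In case (b), Lemma~\ref{lem:hl2} applied to $X(\alpha U \beta) \in \Delta_f$ yields $\alpha U \beta \in \Delta_{f+1}$.

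This gives a clean induction. Let $d$ be the least index $\geq i$ with $\beta \in \Delta_d$ (whose existence I defer to the termination step below). Starting from $\alpha U \beta \in \Delta_i$, I would show by induction that $\alpha U \beta \in \Delta_f$ for every $f$ with $i \leq f \leq d$: at each $f < d$ we have $\beta \notin \Delta_f$, so the decomposition above cannot be case (a) and must be case (b), which both supplies $\{\alpha, X(\alpha U \beta)\} \subseteq \Delta_f$ and, via Lemma~\ref{lem:hl2}, propagates $\alpha U \beta$ into $\Delta_{f+1}$. Combining this with $\alpha U \beta \in \Delta_f$ itself yields exactly $\{\alpha, \alpha U \beta, X(\alpha U \beta)\} \subseteq \Delta_f$ for all $i \leq f < d$, which is the required in-between property. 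So the entire statement reduces to showing that such a $d$ exists, i.e. that $\beta$ cannot be postponed forever.

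The main obstacle is precisely this termination step, and it is where the tick rules do the real work; the rest is routine unfolding. If $b$ ends by EMPTY, then case (b) holding at every state from $i$ onwards would force $\alpha U \beta \in \Delta_f$ for all $f \geq i$, including the empty repeating final state, which is impossible; hence case (a) must be taken at some $d$. If $b$ ends by LOOP, I would argue by contradiction: were case (b) taken at every $f \geq i$, then by the eventual periodicity of the $\Delta_f$ the formula $\alpha U \beta$ would lie in every label of the periodic tail, so $X(\alpha U \beta)$ would survive into the poised matching label $\Gamma(x_{j_l}) = \Gamma(x_m)$ and thus be an $X$-eventuality of $\Gamma(x_m)$. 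The LOOP rule then supplies a node $w$ with $x_m < w \leq x_n$ and $\beta \in \Gamma_w$; since such $w$ lie in the blocks collected into the cycle states, this forces $\beta \in \Delta_d$ for some $d$ in the cycle, contradicting the assumption that case (b) always holds. Hence $d$ exists in either case, which completes the argument.

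The only delicate point to check carefully is the identification between ``a node $w$ between $x_m$ and $x_n$ with $\beta \in \Gamma_w$'' and ``$\beta \in \Delta_d$ for a cycle index $d$'', since the final segment from $x_{j_{k-1}}$ to $x_n$ is folded into $\Delta_N$ rather than into a fresh state; I would verify this bookkeeping directly from the definition of the $\Delta_i$, but it introduces no new ideas.
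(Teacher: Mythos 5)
Your proposal is correct and takes essentially the same route as the paper's own proof: the local dichotomy (either $\beta$ or $\{\alpha, X(\alpha U \beta)\}$ enters the same $\Delta_f$, since only a consuming rule can remove the until-formula), propagation across TRANSITIONs via Lemma~\ref{lem:hl2} with an induction giving persistence up to the least $d$, and termination argued separately for the EMPTY ending (the empty final label cannot contain $X(\alpha U \beta)$) and the LOOP ending (the surviving $X$-eventuality in $\Gamma(x_{j_l})$ must be fulfilled before $x_n$, and periodicity of the $\Delta_i$ turns that fulfilment into some $\Delta_d$ with $d \geq i$). The only cosmetic differences are that the paper also names the $F$-rule (reading $F\beta$ as $\truth U \beta$) as a possible consumer alongside the $U$-rule, and that the bookkeeping point you flag at the end---the final segment of the branch being folded into $\Delta_N$---is exactly how the paper's definition of the $\Delta_i$ absorbs the fulfilment witness.
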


\begin{proof}
For all $i$,
whenever $\alpha U \beta \in \Delta_i$
then either $\beta$ will also be there
or both
$\alpha$ and $X (\alpha U \beta)$ will be.
To see this, consider which static rules can be used to remove
$\alpha U \beta$. Only 
the $U-$ and $F$-rule can do this.

By Lemma~\ref{lem:hl2},
if $X ( \alpha U \beta) \in \Delta_i$
then 
$\alpha U \beta \in \Delta_{i+1}$.
Thus, by a simple induction,
$\alpha U \beta$,
and so also the other two formulas,
will be in all
$\Delta_f$ for $f \geq i$
unless $f \geq d \geq i$ with
$\beta \in \Delta_d$.

It remains to show that 
$\beta$ 
does appear in some
$\Delta_d$.

If the branch ended with EMPTY,
then we know this must happen
as the $\Gamma(x_n)$
is empty and so does not contain $X(\alpha U \beta)$.
So suppose
that the branch
ended with
a LOOP
up to tableau node $x_{j_l}$
but that
$\alpha U \beta \in \Delta_f$
for all $f \geq i$.

For some $f >i$, we have
$(f-N) \bmod M =0$,
so we know
$\alpha U \beta \in \Delta_f
= \Gamma(x_{j_l})$.
Thus 
$\alpha U \beta$ is one of the eventualities
in $\Gamma(x_{j_l})$
that have to be satisfied
between
$x_{j_l}$ and $x_n$.

Say that $\beta \in \Gamma(x_h)$
and it will also be in the
next pre-TRANSITION label
$x_{j_q}$ after $x_h$.
So eventually we find a 
$d \geq i$ such that
$(d-N) \bmod M+N=q$
and
$\beta \in \Delta_d$
as required.
\end{proof}

\begin{lemma}
\label{lem:hl4}
Suppose $\neg (\alpha U \beta) \in \Delta_i$.
Then either 1) or 2) hold.
1) There is some $d \geq i$
such that
$\neg \alpha, \neg \beta \in \Delta_d$
and for all
$f$,
if $i \leq f < d$
then 
$\{ \neg \beta, \neg(\alpha U \beta), X\neg (\alpha U \beta) \}
\subseteq 
\Delta_f$.
2) For all $d \geq i$,
$\{ \neg \beta, \neg(\alpha U \beta), X\neg (\alpha U \beta) \}
\subseteq 
\Delta_d$.

\end{lemma}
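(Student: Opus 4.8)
The plan is to mirror the proof of Lemma~\ref{lem:hl3}, but the argument is in fact simpler: here we only need to verify a \emph{dichotomy} rather than force a particular outcome, so no appeal to the LOOP rule's eventuality-satisfaction condition is required.

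First I would establish a one-step fact: whenever $\neg(\alpha U \beta) \in \Delta_f$, either $\{\neg\alpha, \neg\beta\} \subseteq \Delta_f$ or $\{\neg\beta, X\neg(\alpha U \beta)\} \subseteq \Delta_f$. The justification is that $\neg(\alpha U \beta)$ is non-elementary, so it can neither appear in an EMPTY leaf (an empty label) nor survive into a poised label just before a TRANSITION or LOOP; hence within the inter-TRANSITION segment whose labels are collected by $\Delta_f$ it must be consumed by a static rule, and the only static rule removing it is the $\neg U$-rule. Since the branch passes through exactly one child at that application, exactly one of the two listed alternatives is thereby realised in the union $\Delta_f$.

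Next I would combine this with Lemma~\ref{lem:hl2}: in the second (``right-branch'') alternative, $X\neg(\alpha U \beta) \in \Delta_f$ yields $\neg(\alpha U \beta) \in \Delta_{f+1}$, so $\neg(\alpha U \beta)$ propagates forward for as long as the right-branch alternative keeps being taken. The concluding step is then the dichotomy. Let $d \geq i$ be the least index with $\{\neg\alpha, \neg\beta\} \subseteq \Delta_d$, should one exist. For every $f$ with $i \leq f < d$ the minimality of $d$ excludes the first alternative of the one-step fact, so the second holds and $\{\neg\beta, \neg(\alpha U \beta), X\neg(\alpha U \beta)\} \subseteq \Delta_f$; a short induction using Lemma~\ref{lem:hl2} confirms $\neg(\alpha U \beta) \in \Delta_f$ throughout this range (so the one-step fact does apply at each such $f$). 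Together with $\neg\alpha, \neg\beta \in \Delta_d$ this yields alternative~(1). If no such $d$ exists, the second alternative holds at every $f \geq i$, giving $\{\neg\beta, \neg(\alpha U \beta), X\neg(\alpha U \beta)\} \subseteq \Delta_d$ for all $d \geq i$, which is alternative~(2).

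The main point requiring care is the one-step fact itself: confirming that $\neg(\alpha U \beta)$ genuinely must be decomposed within each inter-TRANSITION segment, and that the branch selects precisely one of the two $\neg U$-children, so that exactly one alternative is recorded in $\Delta_f$. The periodicity of the $\Delta$'s poses no difficulty here, since the forward induction runs over all $f \geq i$ with no finiteness assumption, and alternative~(2) is exactly the case in which the index $d$ never occurs.
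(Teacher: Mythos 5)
Your proof is correct and is essentially the paper's intended argument: the paper's own proof is just ``similar to Lemma~\ref{lem:hl3}'', and your write-up is precisely that adaptation (a one-step fact about which static rule consumes the formula before the next poised label, propagation across TRANSITIONs via Lemma~\ref{lem:hl2}, and an induction up to the least $d$, with the correct observation that the dichotomy makes the LOOP eventuality condition unnecessary here). One small caveat: because the paper treats $F$ and $G$ as first-class symbols identified with their $U$-disabbreviations, a formula whose disabbreviated form is $\neg(\alpha U \beta)$ may also be consumed by the $\neg F$-rule or $G$-rule (the case $\alpha = \truth$) --- the paper's truth lemma mentions the $G$-rule explicitly --- but both of these produce exactly your second alternative, so your dichotomy and the rest of the argument go through unchanged.
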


\begin{proof}
This is similar to Lemma~\ref{lem:hl3}.
\end{proof}

Now we need to show that
$(S,R,g), \sigma \models \phi$.
To do so we prove 
a stronger result:
a {\em truth lemma}.

\begin{lemma}[truth lemma]
for all $\alpha$,
for all $i\geq 0$,
if
$\alpha \in \Delta_i$
then
$(S,R,g), \sigma_{\geq i} \models \alpha$.
\end{lemma}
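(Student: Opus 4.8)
The plan is to prove the truth lemma by structural induction on the formula $\alpha$, using the three helper lemmas (Lemmas~\ref{lem:hl2}, \ref{lem:hl3} and \ref{lem:hl4}) to drive the temporal cases and a secondary index-level induction to drive the purely forward-looking ``always''-style cases. Since $\langle \Delta_i \rangle$ and $\sigma$ are both eventually periodic with $\Delta_i = \Delta_{\sigma_i}$ and $\sigma_i \in \{0, \ldots, N+M-1\}$, I would first note it suffices to treat $0 \le i \le N+M-1$, every other index reducing to one of these. For the base cases, if $p \in \atoms$ with $p \in \Delta_i$, then because $p$ is elementary no static rule consumes it, so it survives down its block to the poised label $\Gamma(x_{j_i})$ just before the $i$-th TRANSITION; hence $p \in g(\sigma_i)$ by the definition of $g$ and $(S,R,g), \sigma_{\geq i} \models p$. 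If $\neg p \in \Delta_i$ then $\neg p$ likewise survives to $\Gamma(x_{j_i})$; since $b$ ends in a tick no node on it is crossed, so the CONTRADICTION-rule never fired and $p$ cannot also lie in that poised label, whence $p \notin g(\sigma_i)$ and $\models \neg p$. The cases $\truth$ and $\neg\falsity$ are immediate, while $\falsity$ and $\neg\truth$ cannot occur in $\Delta_i$ on a ticked branch and are vacuous.

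For the Boolean connectives the argument is uniform: a non-elementary formula such as $\alpha \wedge \beta$, $\neg\neg\alpha$ or $\neg(\alpha \vee \beta)$ cannot survive to a poised label, so the unique static rule consuming it must have been applied somewhere inside the same block (no TRANSITION intervenes), depositing its decomposition into the same $\Delta_i$. Thus for $\wedge$ both conjuncts land in $\Delta_i$ and the induction hypothesis closes the case; for a branching rule such as $\vee$ the particular child lying on $b$ contributes its chosen disjunct to $\Delta_i$, and the induction hypothesis again applies. The remaining propositional and $\neg$-cases are identical in spirit.

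The temporal cases are where the helper lemmas are used. If $X\alpha \in \Delta_i$ (or $\neg X\alpha \in \Delta_i$) then Lemma~\ref{lem:hl2} places $\alpha$ (resp. $\neg\alpha$) in $\Delta_{i+1}$, and since $(\sigma_{\geq i})_{\geq 1} = \sigma_{\geq i+1}$ the induction hypothesis at $i+1$ yields $\models X\alpha$. If $\alpha U \beta \in \Delta_i$, Lemma~\ref{lem:hl3} supplies $d \ge i$ with $\beta \in \Delta_d$ and $\alpha \in \Delta_f$ for $i \le f < d$; applying the induction hypothesis to $\beta$ at $d$ and to $\alpha$ at each such $f$ matches the semantic clause for $U$ exactly. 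The case $\neg(\alpha U \beta)$ runs through the two alternatives of Lemma~\ref{lem:hl4}: alternative (1) yields a point $d$ where $\neg\alpha$ holds, blocking every later witness for the until, while alternative (2) keeps $\neg\beta$ true forever. For the safety-style formulas $G\alpha$ and $\neg F\alpha$ I would run a secondary induction on the index: from $G\alpha \in \Delta_i$ the $G$-rule gives $\alpha, XG\alpha \in \Delta_i$, and Lemma~\ref{lem:hl2} propagates $G\alpha$ into every $\Delta_j$ with $j \ge i$, so the formula-level induction hypothesis makes $\alpha$ true at all such $j$. The eventuality-style formulas $F\alpha$ and $\neg G\alpha$ are handled by lemmas wholly analogous to Lemma~\ref{lem:hl3}.

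The main obstacle is exactly the eventuality cases: ensuring that for $\alpha U \beta$ (and $F\beta$) the witness index $d$ with $\beta \in \Delta_d$ genuinely exists rather than the eventuality being postponed forever. This is what the LOOP rule's side condition buys us. In an EMPTY-terminated branch the empty final label cannot carry $X(\alpha U \beta)$, forcing the obligation to have already discharged; in a LOOP-terminated branch periodicity forces $\alpha U \beta$ to reappear in $\Gamma(x_{j_l})$, where the rule guarantees it was cured between the matching node and the leaf. The delicate step is threading this fulfilment argument through the periodic identification $\Delta_i = \Delta_{\sigma_i}$, so that a cure occurring physically near the tableau leaf is read back as a cure at a finite index $d \ge i$ along $\sigma$. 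That work is already isolated for us in Lemma~\ref{lem:hl3}, and the analogues for $F$ and $\neg G$ carry the same burden; everything else is a routine unwinding of the semantic clauses under the induction hypothesis.
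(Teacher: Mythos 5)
Your proposal is correct and follows essentially the same route as the paper: a structural induction proving the $\alpha$ and $\neg\alpha$ statements together, with atoms read off the poised labels via $g$, Boolean cases discharged by noting the consuming static rule deposits its decomposition in the same $\Delta_i$, and the temporal cases driven by Lemmas~\ref{lem:hl2}, \ref{lem:hl3} and \ref{lem:hl4}. The only cosmetic difference is that you treat $G$, $F$, $\neg G$, $\neg F$ as separate first-class cases (via a secondary index induction and hl3-analogues), whereas the paper folds them into the $U$ and $\neg U$ cases by viewing them as abbreviations and enumerating which rules can remove them; both handle the same content.
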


\begin{proof}
This is proved by induction on the construction
of $\alpha$.
However, we do cases for $\alpha$ and
$\neg \alpha$ together and prove that
for all $\alpha$,
for all $i \geq 0$:
if
$\alpha \in \Delta_i$
then
$(S,R,g), \sigma_{\geq i} \models \alpha$;
and
if
$\neg \alpha \in \Delta_i$
then
$(S,R,g), \sigma_{\geq i} \models \neg \alpha$.

The case by case reasoning is straightforward
given the preceeding lemmas.
See the long version for details.

{Case $p$:}
Fix $i \geq 0$.
If $i<N$ let $i'=i$
and otherwise
let $i'=(i-N)\bmod M +N$.
Thus $\sigma_i=i'$.
If
$p \in \Delta_i= \Gamma(x_{j_{i'}})$
then, by definition of $g$,
$p \in g(i')$.
So
$p \in g(\sigma_i)$
and
$(S,R,g), \sigma_{\geq i} \models p$
as required.
If 
$\neg p \in \Delta_i$
then
(by rule CONTRADICTION)
we did not put $p$ in
$g(i')$ and thus
$(S,R,g), \sigma_{\geq i} \models \neg p$.
This follows as no static rules
remove atoms or negated atoms from
labels.

{Case $\neg \neg \alpha$:}
Fix $i \geq 0$.
If 
$\neg \neg \alpha \in \Delta_i$
then
$(S,R,g), \sigma_{\geq i} \models \neg \neg \alpha$
because
$\alpha$ will also have been put in
$\Delta_i$
(by the $\neg \neg$-rule)
and
so by induction
$(S,R,g), \sigma_{\geq i} \models \alpha$.
Note that the $\falsity$-rule also
removes a double negation
from a label set but it immediately
crosses the branch so it is not relevant here.
$\neg \neg \neg \alpha$ is similar.

{Case $\alpha \wedge \beta$:}
Fix $i \geq 0$.
Suppose
$\alpha \wedge \beta \in \Delta_i$.
We know this formula is removed before the next 
TRANSITION (or the end of the branch if that is sooner).
There are two ways for such a conjunction to be removed:
the $\wedge$-rule,
or if the optional $\leftrightarrow$-rule is able to applied
and is used.
In the first case
$(S,R,g), \sigma_{\geq i} \models \alpha \wedge \beta$
because
$\alpha$ and $\beta$ will also have been put in
$\Delta_i$
(by the $\wedge$-rule)
and
so by induction
$(S,R,g), \sigma_{\geq i} \models \alpha$ and
$(S,R,g), \sigma_{\geq i} \models \beta$.

Suppose instead that
$\alpha = \alpha_1 \rightarrow \beta_1$
and
$\beta = \beta_1 \rightarrow \alpha_1$
and the $\leftrightarrow$-rule is used
to remove $\alpha \wedge \beta$.
Then on branch $b$
either
$\alpha_1$ and $\beta_1$ are included and so
in $\Delta_i$ as well,
or their negations are.
In the first case,
by induction we have
$(S,R,g), \sigma_{\geq i} \models \alpha_1$ and
$(S,R,g), \sigma_{\geq i} \models \beta_1$,
and so we also have
$(S,R,g), \sigma_{\geq i} \models \alpha_1 \rightarrow \beta_1$
and
$(S,R,g), \sigma_{\geq i} \models \beta_1 \rightarrow \alpha_1$
as required.
The second negated case is similar.

Suppose
$\neg (\alpha \wedge \beta) \in \Delta_i$.
Again we know this formula is removed 
and we see that
there are four rules that could cause that to happen:
$\neg \wedge$-rule,
$\truth$-rule,
$\vee$-rule
and
$\rightarrow$-rule.

If
$\neg (\alpha \wedge \beta) \in \Delta_i$
is removed by $\neg \wedge$-rule then
$(S,R,g), \sigma_{\geq i} \models \neg (\alpha \wedge \beta)$
because
we will have 
put
$\neg \alpha \in \Delta_i$
or
$\neg \beta \in \Delta_i$
(or one or both of them are already there)
and so by induction
$(S,R,g), \sigma_{\geq i} \models \neg \alpha$
or
$(S,R,g), \sigma_{\geq i} \models \neg \beta$.

$\truth$-rule:
If
$\truth= \neg (\neg p \wedge \neg \neg p) \in \Delta_i$
is removed by $\truth$-rule then
$(S,R,g), \sigma_{\geq i} \models \neg (\neg p \wedge \neg \neg p)$
anyway so we are done.

$\vee$-rule:
If
$\alpha \vee \beta= \neg (\neg \alpha \wedge \neg \beta) \in \Delta_i$
is removed by $\vee$-rule then
$(S,R,g), \sigma_{\geq i} \models \neg (\neg \alpha \wedge \neg \beta)$
because
we will have 
put
$\alpha \in \Delta_i$
or
$\beta \in \Delta_i$
(or one or both of them are already there)
and so by induction
$(S,R,g), \sigma_{\geq i} \models \alpha$
or
$(S,R,g), \sigma_{\geq i} \models \beta$.

$\rightarrow$-rule:
If
$\alpha \rightarrow \beta= \neg (\neg \neg \alpha \wedge \neg \beta) \in \Delta_i$
is removed by $\rightarrow$-rule then
$(S,R,g), \sigma_{\geq i} \models \neg (\neg \neg \alpha \wedge \neg \beta)$
because
we will have 
put
$\neg \alpha \in \Delta_i$
or
$\beta \in \Delta_i$
(or one or both of them are already there)
and so by induction
$(S,R,g), \sigma_{\geq i} \models \neg \alpha$
or
$(S,R,g), \sigma_{\geq i} \models \beta$.

{Case $\alpha U \beta$:}
If
$\alpha U \beta \in \Delta_i$
then
by the $U$-rule, or the optional $F$-rule,
we will have either
put both
$\alpha \in \Delta_i$
and
$X (\alpha U \beta) \in \Delta_i$
or
we will have
$\beta \in \Delta_i$.

Consider the second case.
$(S,R,g), \sigma_{\geq i} \models \beta$ so
$(S,R,g), \sigma_{\geq i} \models \alpha U \beta$
and we are done.

Now consider the first case:
$\alpha U \beta \in \Delta_i$
as well as
$\alpha \in \Delta_i$
and
$X (\alpha U \beta) \in \Delta_i$.
By Lemma~\ref{lem:hl3},
this keeps being true for later $i' \geq i$
until
$\beta \in \Delta_{i'}$.
By induction, for each $i' \geq i$ until then,
$(S,R,g), \sigma_{\geq {i'}} \models \alpha$.
Clearly if we get to a $l > i$ with
$\beta \in \Delta_l$
then 
$(S,R,g), \sigma_{\geq {l}} \models \beta$
and
$(S,R,g), \sigma_{\geq {i}} \models \alpha U \beta$
as required.

If
$\neg (\alpha U \beta) \in \Delta_i$
then $\neg U$-rule
and $G$-rule
mean that
$\neg \beta, \neg \alpha \in \Delta_i$
or
$\neg \beta, X\neg (\alpha U \beta) \in \Delta_i$.

In the first case,
$(S,R,g), \sigma_{\geq i} \models \neg \alpha $
and
$(S,R,g), \sigma_{\geq i} \models \neg \beta$
so
$(S,R,g), \sigma_{\geq i} \models \neg (\alpha U \beta)$
as required.

In the second case we 
can use Lemma~\ref{lem:hl4}
which uses 
an induction 
to show that
$\neg \beta$, $\neg (\alpha U \beta)$, 
$X \neg (\alpha U \beta)$
keep appearing in the $\Delta_{i'}$ labels
forever
or until $\neg \alpha$ also appears.

In either case
$(S,R,g), \sigma_{\geq i} \models \neg (\alpha U \beta)$
as required.

{Case $X \alpha$:}
If
$X \alpha \in \Delta_i$
then, by Lemma~\ref{lem:hl2}
$\alpha \in \Delta_{i+1}$
 so
 by induction
 $(S,R,g), \sigma_{\geq i+1} \models \alpha$
 and
$(S,R,g), \sigma_{\geq i} \models X \alpha$
as required.

$\neg X \alpha$ is similar.
If
$\neg X \alpha \in \Delta_i$
then, by Lemma~\ref{lem:hl2},
$\neg \alpha \in \Delta_{i+1}$
 so
 by induction (because we did $\neg \alpha$ first)
 $(S,R,g), \sigma_{\geq i+1} \models \neg \alpha$
 and
$(S,R,g), \sigma_{\geq i} \models \neg X \alpha$
as required.

And thus ends the soundness proof.
\end{proof}

If we have a successful tableau then the formula
is satisfiable.

Notice that the 
 \prune\ rules play no part
 in the soundness proof.
 A ticked branch encodes a model
even if a \prune\ rule
is not applied when it could be.

\section{Proof of Completeness:}
\label{sec:complete}

We have to show that
if a formula
has a model then
it has a successful tableau.
This time we will use the model 
to find the tableau.
The basic idea 
is to use a model (of the satisfiable formula)
to show that
{\em in any tableau}
there will be a branch 
(i.e. a leaf) with a tick.

A weaker result is to show that
there is some tableau with a leaf with a tick.
Such a weaker result 
may actually be ok 
to establish correctness
and complexity of the tableau technique.
However, it 
raises questions about whether
a ``no'' answer from a tableau
is correct
and
it does not give clear
guidance for the implementer.
We show the stronger result:
it does not matter which
order static rules are applied.

\begin{lemma}[Completeness]
Suppose that $\phi$ is a satisfiable formula
of LTL.
Then any finished tableau for $\phi$ will
be successful.
\end{lemma}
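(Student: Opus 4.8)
The plan is to take a model of $\phi$ and use it to trace a single root-to-leaf path through the \emph{given} finished tableau, then argue that this path can only end in a tick. First I would fix a structure $(S,R,g)$ and a fullpath $\sigma$ with $(S,R,g),\sigma \models \phi$; since $S$ is finite, $\sigma$ is eventually periodic, so I may take the model to be a finite lasso. I then descend from the root, carrying a pointer $t$ into $\sigma$ and maintaining the invariant that every formula in the current node's label is true at $\sigma_{\geq t}$. At a one-child static rule the invariant is preserved with the same $t$; at a branching static rule ($\vee$, $\rightarrow$, $U$, \ldots) at least one child's label is true at $\sigma_{\geq t}$, and I follow such a child, adopting the \emph{prompt-curing} policy of preferring, for a $U$- or $F$-rule, the child that introduces $\beta$ whenever $(S,R,g),\sigma_{\geq t}\models\beta$; at a TRANSITION I follow the unique child and advance $t$ to $t+1$. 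Because I only ever need \emph{some} satisfied child to exist, the construction works no matter which formula the tableau chose to decompose first, which is exactly why the argument applies to any finished tableau and yields the stronger, order-independent statement.

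Next I would rule out the crossing rules along this guided path. A label true at $\sigma_{\geq t}$ can contain neither a complementary pair $\alpha,\neg\alpha$, nor $\falsity$, nor $\neg\truth$, so CONTRADICTION, the $\falsity$-rule and the $\neg\truth$-rule never apply: the path meets no logical cross. Since the tableau is finished there are no infinite branches, so the guided path reaches a leaf after finitely many steps; as logical crosses are excluded, it remains only to exclude \prune\ and \prunez, after which the leaf can only be an EMPTY- or LOOP-tick and the tableau is successful.

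The crux, and the step I expect to be the main obstacle, is excluding \prune\ and \prunez, and here I would exploit a minimality assumption on the model. I would choose the lasso to be of least size (prefix length plus loop length) among all models of $\phi$, and prove an \emph{excision lemma}: if the guided path exhibited the repetition pattern demanded by \prunez\ (poised nodes $u<v$ with common label $\Gamma$ containing a pending eventuality but with no $\beta$ introduced in $(u,v]$) or by \prune\ (nodes $u<v<w$ with common label $\Gamma$ whose third interval cures nothing not already cured in the second), then the corresponding segment of the atomic-valuation sequence of $\sigma$ could be spliced out to give a strictly shorter sequence still satisfying $\phi$. Since LTL truth along a fullpath depends only on this sequence of valuations, the spliced sequence is automatically a legitimate model, contradicting minimality; hence neither pattern occurs and neither prune rule fires. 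The delicate point is that no eventuality is destroyed by the excision: equality of the poised labels makes the pending $X$-obligations at the two splice points coincide, and the \prune\ side-condition (every cure in the removed interval already occurring in the retained one) is precisely what guarantees that every until- and $F$-eventuality still finds a witness after splicing. This is the interesting reasoning, and it is closely related to the pumping argument behind the small model theorem for LTL in \cite{SiC85}.

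Finally I would assemble the pieces: the guided path is finite, avoids every cross, and therefore terminates at a ticked leaf, so the finished tableau contains a tick and is successful. It is worth noting separately that prompt-curing, together with the fact that in a genuine lasso every pending eventuality is cured within one traversal of the loop, is what actually makes a LOOP-tick \emph{available} at the first label-repetition aligned with the period; minimality is needed only to ensure that no prune rule intervenes before that point.
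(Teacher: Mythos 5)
Your overall scaffolding is the same as the paper's: descend through the given finished tableau guided by a model, maintaining the invariant that every formula in the current label holds at $\sigma_{\geq t}$, and observe that CONTRADICTION, the $\falsity$-rule and the $\neg\truth$-rule can never fire on such a path. That part is sound, and your excision lemma itself is in fact provable (by a backward-soundness induction along the guided path: the label of $w$ holds at $\sigma_{\geq t_w}$, each static rule's parent label is true whenever the followed child's label is, and the literals at positions before the splice are unaffected). The fatal problem is the step from the excision lemma to a contradiction with minimality. Infinite sequences have no length; the only measure available is the size (prefix length plus loop length) of the minimal lasso representation, and splicing a block out of an ultimately periodic sequence does \emph{not} in general decrease that measure. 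Concretely, write $N$ for the prefix length and $M$ for the period of your minimal model, and let $t_u<t_v$ be the positions of the two equal-labelled poised nodes. If $t_v\le N$, or $t_u<N\le t_v$, the splice does shorten the lasso and your argument works. But if $t_u\ge N$ (both repetition points inside the periodic part) and $t_v-t_u$ is not a multiple of $M$, the spliced sequence is $\sigma_{[0,t_u)}$ followed by a \emph{rotated} copy of the loop: its natural representation has size $t_u+M\ge N+M$, and minimality of $\sigma$ only tells you that no model has size $<N+M$, so there is no contradiction. (Separations of at least a full period inside the loop are already impossible under prompt curing, since every pending eventuality is cured within one traversal; so the case your argument fails to exclude --- a label repetition inside the loop part with separation less than one period, no cures in between --- is exactly the dangerous one.) Nothing in your proposal rules out PRUNE or PRUNE$_0$ firing in that configuration, and this is not a technicality: it is the crux of the whole proof.

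It is worth seeing how the paper sidesteps this. It does \emph{not} claim the prune rules never fire on the model-guided path. Instead, when the guided path reaches a leaf crossed by PRUNE with triple $u<v<w$ (or by PRUNE$_0$ with pair $u<v$), the construction jumps back up to $v$ (respectively $u$) --- legitimate because that node carries the same label, so the invariant transfers with the \emph{current} model index $J(i)$, which is never reset --- and continues descending, possibly making different choices because the pointer into $\sigma$ has advanced. The substance of the proof is then a termination argument: assuming infinitely many jumps, one picks a recurring jump tuple whose top node $u_0$ is highest, shows every $X$-eventuality of $\Gamma(u_0)$ is eventually fulfilled at some node $x_{N_\beta}$ because the model index keeps advancing along the fullpath, and then shows no later jump can go back above $x_{N_\beta}$ without violating the prune rules' own side conditions --- contradicting the recurrence of the tuple. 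Your minimality-plus-excision idea was meant to replace exactly this argument, but as it stands it leaves the periodic-part case open, so the proof is incomplete.
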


\begin{proof}
Suppose that $\phi$ is a satisfiable formula
of LTL.
It will have a model. Choose one, say
$(S,R,g), \sigma \models \phi$.
In what follows we (use standard practice when the
model is fixed and) write
$\sigma_{\geq i} \models \alpha$
when we mean
$(S,R,g), \sigma_{\geq i} \models \alpha$.

Also, build a completed tableau $T$ for
$\phi$ in any manner as long as the rules are followed.
Let $\Gamma(x)$ be the formula
set label on the node $x$ in $T$.
We will show that 
$T$ has a ticked leaf.

To do this we will
construct a sequence
$x_0, x_1, x_2, ....$
of nodes, with $x_0$ being the root.
This sequence may terminate at a tick
(and then we have succeeded)
or it may hypothetically go on forever
(and more on that later).
In general, the sequence 
will head downwards
from a parent to a child node
but 
occasionally it may 
jump back up to an ancestor.

As we go we will
make sure that each
node $x_i$ is associated with
an index $J(i)$ along the fullpath $\sigma$
and we guarantee the following invariant
$INV(x_i,J(i))$ for each $i \geq 0$.
The relationship
$INV(x,j)$ is that
for each $\alpha \in \Gamma(x)$,
$ \sigma_{\geq j} \models \alpha$.

Start by putting $J(0)=0$
when $x_0$ is the tableau root node.
Note that
the only formula in
$\Gamma(x_0)$ is $\phi$ and
that 
$ \sigma_{\geq 0} \models \phi$.
Thus $INV(x_0,J(0))$ holds at the start.

Now suppose that we have identified the $x$ sequence up until
$x_{i}$.
Consider the rule that is used in $T$ to
extend a tableau branch from $x_i$ to some
children.
Note that we can also ignore the 
cases in which the
rule is EMPTY or LOOP
because they would immediately give us the
ticked branch that is sought.

It is useful to define the sequence advancement procedure
in the cases apart from the PRUNE rules separately.
Thus we now describe a procedure,
call it $A$,
that is given a node $x$ and index $j$
satisfying
$INV(x,j)$
and, in case that the node $x$ has children via 
any rule except PRUNE,
the procedure $A$ will give us
a child node $x'$ and index $j'$
which is either $j$ or $j+1$,
such that $INV(x',j')$ holds.
The idea will be to use procedure $A$
on $x_i$ and $J(i)$
to get $x_{i+1}$ and $J(i+1)$
in case the PRUNE rule is 
not used at node $x_i$.
We return to deal with advancing from $x_i$
in case that a PRUNE rule is used later.
So now we describe procedure $A$
with $INV(x,j)$ assumed.

[EMPTY]
If $\Gamma(x)= \{ \}$
then
we are done. $T$ is a successful tableau as required.

[CONTRADICTION]
Consider if it is possible for us to reach
a leaf at $x$ with a cross
because of a contradiction.
So there is some $\alpha$ with
$\alpha$ and $\neg \alpha$ in $\Gamma(x)$.
But this can not happen as then
$ \sigma_{\geq j} \models \alpha$
and
$ \sigma_{\geq j} \models \neg \alpha$.

[$\neg \neg$-rule]
So $\neg \neg \alpha$ is in $\Gamma(x)$
and there is one child,
which we will make
$x'$
and we will put
$j'=j$.
Because
$ \sigma_{\geq j} \models \neg \neg \alpha$
we also have
$ \sigma_{\geq j'} \models \alpha$.
Also 
for every other $\beta \in \Gamma(x') \subseteq
\Gamma(x) \cup \{ \alpha \}$,
we still have
$ \sigma_{\geq j'} \models \beta$.
So we have the invariant holding.

{
[$\wedge$-rule]}
So $\alpha \wedge \beta$ is in $\Gamma(x)$
and there is one child,
which we will make
$x'$
and we will put
$j'=j$.
Because
$ \sigma_{\geq j} \models \alpha \wedge \beta$
we also have
$ \sigma_{\geq j'} \models \alpha$
and
$ \sigma_{\geq j'} \models \beta$.
Also 
for every other $\gamma \in \Gamma(x') \subseteq
\Gamma(x_i) \cup \{ \alpha, \beta \}$,
we still have
$ \sigma_{\geq j'} \models \gamma$.
So we have the invariant holding.

{
[$\neg \wedge$-rule]}
So $\neg (\alpha \wedge \beta)$ is in $\Gamma(x)$
and there are two children.
One $y$ is labelled
$(\Gamma(x) \setminus \{ \neg ( \alpha \wedge \beta) \} ) \cup \{ \neg \alpha \}$ and
the other, $z$,
is labelled
$( \Gamma(x) \setminus \{ \neg ( \alpha \wedge \beta) \} ) \cup \{ \neg \beta \}$.
We know
$ \sigma_{\geq j} \models \neg ( \alpha \wedge \beta)$.
Thus,
$ \sigma_{\geq j} \not \models \alpha \wedge \beta$
and it is not the case that
both
$ \sigma_{\geq j} \models \alpha$
and
$ \sigma_{\geq j} \models \beta$.
So either
$ \sigma_{\geq j} \models \neg \alpha$
or
$ \sigma_{\geq j} \models \neg \beta$.

If the former,
i.e. that
$ \sigma_{\geq j} \models \neg \alpha$
we will make $x'=y$
and otherwise we will
make
$x'=z$.
In either case put
$j'=j$.
Let us check the invariant.
Consider the first case. The other is exactly analogous.

We already know that we have
$ \sigma_{\geq j'} \models \neg \alpha$.
Also 
for every other $\gamma \in \Gamma(x') =
\Gamma(y) \subseteq
\Gamma(x) \cup \{ \neg \alpha \}$,
we still have
$ \sigma_{\geq j'} \models \gamma$.
So we have the invariant holding.

[$U$-rule]
So $\Gamma(x)= \Delta \cupdot \{ \alpha U \beta \}$
and there are two children.
One $y$ is labelled
$\Gamma(y)= \Delta \cup \{ \beta \}$
 and
the other, $z$,
is labelled
$\Gamma(z)= \Delta \cup \{ \alpha, X(\alpha U \beta) \}$.
We know
$ \sigma_{\geq j} \models \alpha U \beta$.
Thus,
there is some $k \geq j$ such that
$\sigma_{\geq k} \models \beta$
and
for all $l$,
if
$j \leq l < k$
then
$ \sigma_{\geq l} \models \alpha$.
If 
$ \sigma_{\geq j} \models \beta$
then 
we can choose $k=j$
(even if other choices as possible)
and otherwise
choose any such $k > j$.
Again there are two cases,
either $k=j$
or $k> j$.

In the first case, when 
$ \sigma_{\geq j} \models \beta$,
we put
$x'=y$
and otherwise we will
make
$x'=z$.
In either case put
$j'=j$.

Let us check the invariant.
Consider the first case.
We have
$ \sigma_{\geq j'} \models \beta$.

In the second case,
we know that we have
$ \sigma_{\geq j'} \models \alpha$
and
$ \sigma_{\geq j'+1} \models \alpha U \beta$.
Thus
$ \sigma_{\geq j'} \models X( \alpha U \beta)$.

Also, in either case,
for every other $\gamma \in \Gamma(x')$
we still have
$ \sigma_{\geq j'} \models \gamma$.
So we have the invariant holding.

[$\neg U$-rule]
So $\Gamma(x) = \Delta \cupdot \{ \neg( \alpha U \beta)\}$
and there are two children.
One $y$ is labelled
$\Delta \cup \{ \neg \alpha, \neg \beta \}$
 and
the other, $z$,
is labelled
$\Delta \cup \{ \neg \beta, X\neg (\alpha U \beta) \}$.
We know
$ \sigma_{\geq j} \models \neg (\alpha U \beta)$.
So for sure
$ \sigma_{\geq j} \models \neg \beta$.

Furthermore,
possibly
$ \sigma_{\geq j} \models \neg \alpha$
as well,
but otherwise
if
$ \sigma_{\geq j} \models \alpha$
then we can show that we can not have
$ \sigma_{\geq j+1} \models \alpha U \beta$.
Suppose for contradiction
that
$ \sigma_{\geq j} \models \alpha$
and
$\sigma_{\geq j+1} \models \alpha U \beta$.
Then 
there is some $k \geq j$ such that
$ \sigma_{\geq k} \models \beta$
and
for all $l$,
if
$j \leq l < k$
then
$ \sigma_{\geq l} \models \alpha$.
Thus
$\sigma_{\geq j} \models \alpha U \beta$.
Contradiction.

So we can conclude that there
are two cases when the $\neg U$-rule is used.
CASE 1:
$ \sigma_{\geq j} \models \neg \beta$
and
$ \sigma_{\geq j} \models \neg \alpha$.
CASE 2:
$ \sigma_{\geq j} \models \neg \beta$
and
$ \sigma_{\geq j+1} \models \neg (\alpha U \beta)$.

In the first case, when 
$\sigma_{\geq j} \models \neg \beta$,
we put
$x'=y$
and otherwise we will
make
$x'=z$.
In either case put
$j'=j$.

Let us check the invariant.
In both cases
we know that we have
$ \sigma_{\geq j'} \models \neg \beta$.
Now consider the first case.
We also have 
$ \sigma_{\geq j} \models \neg \alpha$.
In the second case,
we know that we have
$ \sigma_{\geq j+1} \models \neg (\alpha U \beta)$.
Thus
$ \sigma_{\geq j'} \models X\neg (\alpha U \beta)$.
Also, in either case,
for every other $\gamma \in \Gamma(x')$
we still have
$ \sigma_{\geq j'} \models \gamma$.
So we have the invariant holding.

[OTHER STATIC RULES]: similar.

{
[TRANSITION]}
So $\Gamma(x)$ is poised
and there is one child,
which we will make
$x'$
and we will put
$j'=j+1$.

Consider a formula
 $\gamma \in \Gamma(x')=
\{ \alpha |
X \alpha \in \Gamma(x) \}
\cup
\{ \neg \alpha |
\neg X \alpha \in \Gamma(x) \}$.

CASE 1: Say that
$X \gamma \in \Gamma(x)$.
Thus, by the invariant,
$\sigma_{\geq j} \models X \gamma$.
Hence,
$\sigma_{\geq j+1} \models \gamma$.
But this is just
$\sigma_{\geq j'} \models \gamma$
as required.

CASE 2: Say that $\gamma= \neg \delta$
and
$\neg X \delta \in \Gamma(x)$.
Thus, by the invariant,
$\sigma_{\geq j} \models \neg X \delta$.
Hence,
$\sigma_{\geq j+1} \not \models \delta$.
But this is just
$\sigma_{\geq j(i+1)} \models \gamma$
as required.

So we have the invariant holding.

{
[LOOP]}
If, in $T$, the node $x_i$
is a leaf just getting a tick via the 
LOOP rule
then
we are done. 
$T$ is a successful tableau as required.

So that ends the description of procedure $A$
that is given a node $x$ and index $j$
satisfying
$INV(x,j)$
and, in case that the node $x$ has children via 
any rule except PRUNE or \prunez,
the procedure $A$ will give us
a child node $x'$ and index $j'$,
which is either $j$ or $j+1$,
such that $INV(x',j')$ holds.
We use procedure $A$ to
construct a sequence
$x_0, x_1, x_2, ....$
of nodes, with $x_0$ being the root.
and guarantee the invariant
$INV(x_i,J(i))$ for each $i \geq 0$.

The idea will be to use procedure $A$
on $x_i$ and $J(i)$
to get $x_{i+1}$ and $J(i+1)$
in case the PRUNE rule is 
not used at node $x_i$.
Start by putting $J(0)=0$
when $x_0$ is the tableau root node.
We have seen that $INV(x_0,J(0))$ holds at the start.

{[\prune\ ]}
Now, we complete the description of the construction
of the $x_i$ sequence by 
explaining what to do 
in case
$x_i$ is a node on which PRUNE
is used.
Suppose that
$x_i$
is a node which gets a cross
in $T$
via the \prune\ rule.
So there is a sequence
$u=x_h, x_{h+1}, ...,
x_{h+a}=v, x_{h+a+1}, ...,
x_{h+a+b}=x_i=w$
such that
$\Gamma(u)=\Gamma(v)=\Gamma(w)$
and no extra eventualities of
$\Gamma(u)$ are satisfied between
$v$ and $w$ that were
not already satisfied
between $u$ and $v$.

What we do now is to 
undertake a sort of backtracking exercise
in our proof.
We
choose some such
$u$, $v$ and $w$,
there may be more than one triple,
and
proceed with the construction
as if 
$x_i$ was $v$ instead of $w$.
That is we use
the procedure
$A$
on $v$ with $J(i)$
 to get from $v$ to
 one $x_{i+1}$ of its children
 and define $J(i+1)$.
Procedure $A$ above 
can be applied because
$\Gamma(v)=\Gamma(x_i)$
and so the invariant holds
for $v$ with $J(i)$ as well as for  $x_i$ with $J(i)$.

Thus we keep going
with the new $x_{i+1}$ child of $v$,
and $J(i)$.

If the variant \prunez\ rule is
used on $x_i$ then
the action is similar but simpler.
So there is a sequence
$u=x_h, x_{h+1}, ...,
x_{h+a}=x_i=v$
such that
$\Gamma(u)=\Gamma(v)$
and no eventualities of
$\Gamma(u)$ are satisfied between
$u$ and $v$ but there is at least
one eventuality in $\Gamma(u)$.

What we do now is to 
choose some such
$u$, and $v$,
and
proceed with the construction
as if 
$x_i$ was $u$ instead of $v$.
That is we use
the procedure
$A$
on $u$ with $J(i)$
 to get from $u$ to
 one $x_{i+1}$ of its children
 and define $J(i+1)$.
Procedure $A$ above 
can be applied because
$\Gamma(u)=\Gamma(x_i)$
and so the invariant holds
for $u$ with $J(i)$ as well as for  $x_i$ with $J(i)$.
Thus we keep going
with the new $x_{i+1}$ child of $u$,
and $J(i)$.

Now let us consider 
whether the above construction
goes on for ever.
Clearly it may
end finitely with
us 
finding a ticked leaf and
succeeding.
However,
at least in theory,
it may seem possible that the
construction keeps going forever
even though the tableau will be 
finite.
The rest of the proof
is to show that this 
actually can not happen.
The construction can not go on forever.
It must stop and
the only way that we have 
shown that that can happen is
by finding a tick.

Suppose for contradiction that the construction does go on forever.
Thus, because there are only a finite number of nodes in the tableau,
we must meet the \prune\ 
{(or \prunez)}
rule and
jump back up the tableau infinitely often.

\newcommand{\triple}{{tuple}}

When we do find an application of the \prune\ rule with
triple $(u,v,w)$ of nodes from $T$ call that a
jump triple.
Similarly, 
when we find an application of the \prunez\ rule with
pair $(u,v)$ of nodes from $T$ call that a
jump pair.
Let jump {\em tuples} be either
 jump pairs or jump triples.

There are only a finite number of jump \triple s
so there must be some that
cause us to jump infinitely often.
Call these {\em recurring}
jump \triple s.

Say that
$(u_0,v_0, w_0)$ {or $(u_0,v_0)$}
is
one such.
We can choose $u_0$ so that 
for no other recurring jump triple
$(u_1,  v_1, w_1)$
{or pair $(u_1,v_1)$}
do we have
$u_1$ being a proper ancestor of $u_0$.

As we proceed through the construction
of $x_0, x_1, ..$
and see a jump every so often,
eventually all the 
jump \triple s
who only cause a jump a finite number
of times
stop causing jumps.
After that time,
$(u_0,v_0, w_0)$
{or $(u_0,v_0)$}
will still cause a jump
every so often.

Thus after that time
$u_0$ will never appear
again as  the $x_i$ that we choose
and all the $x_i$s that we choose will
be descendants of $u_0$.
This is because we will never jump
up to $u_0$ or above it (closer to the root).
Say that $x_N$ is the very last $x_i$ that
is equal to $u_0$.

Now consider any $X(\alpha U \beta)$ that appears
in $\Gamma(u_0)$.
(There must be at least one eventuality
in $\Gamma(u_0)$ as it is used to
apply rule \prune
{ or \prunez}).

A simple induction shows that
$\alpha U \beta$
or $X( \alpha U \beta)$
will appear in every
$\Gamma(x_i)$
from $i=N$
up until at least when
$\beta$ appears
in some $\Gamma(x_i)$ after that
(if that ever happens).
This is because
if $\alpha U \beta$
is in $\Gamma(x_i)$
and $\beta$ is not there
and does not get put there
then
$X(\alpha U \beta)$ will also be put
in before the next temporal TRANSITION rule.
Each temporal TRANSITION rule
will thus put $\alpha U \beta$
into the new label.
Finally, in case the $x_i$ sequence meets a PRUNE jump
$(u,v,w)$
then the new $x_{i+1}$ will be a child 
of $v$ which is a descendent of $u$
which is a descendent of $u_0$
so will also contain 
$\alpha U \beta$
or $X(\alpha U \beta)$.
{Similarly with \prunez\ jumps.}

Now $J(i)$ just
increases by $0$ or $1$ 
with each increment of $i$,
We also know that
$\sigma_{\geq J(i)} \models \alpha U \beta$
from $i=N$ onwards
until (and if) $\beta$ gets put in $\Gamma(x_i)$.
Since $\sigma$ is a fullpath
we will eventually get to some
$i$
with
$\sigma_{\geq J(i)} \models \beta$.
In that case our
construction makes us put
$\beta$ in the label.
Thus we do eventually get to some
$i \geq N$
with $\beta \in \Gamma(x_i)$.
Let $N_\beta$ be the first
such $i \geq N$.
Note that
all the nodes between
$u_0$ and
$x_{N_\beta}$
in the tableau
also appear as
$x_i$ for 
$N < i < N_\beta$
so that
they all have
$\alpha U \beta$ and not $\beta$
in their labels
$\Gamma(x_i)$.

Now let us consider if 
we ever
jump up above 
$x_{N_\beta}$
at any TRANSITION of our construction
(after $i=N_\beta$).
In that case 
there would be a PRUNE jump triple
of tableau nodes
$u$, $v$ and $w$
governing the first such jump
{or possibly a \prunez\ jump.
Consider first a \prune\ jump}.
Since $u$ is not above
$u_0$
and $v$ is above 
$x_{N_\beta}$,
we must have 
$\Gamma(u) = \Gamma(v)$
with 
$X(\alpha U \beta)$
in them
and $\beta$ not satisfied in between.
But $w$ will be below
$x_{N_\beta}$
at the first such jump,
meaning that
$\beta$ is satisfied
between
$v$ and $w$.
That is
a contradiction to the PRUNE rule being applicable to this triple.

Now consider a \prunez\ jump
from $v$ below $x_{N_\beta}$
up to $u$ above it.
Since $u$ is not above
$u_0$
and $v$ is below
$x_{N_\beta}$,
we must have 
$\Gamma(u) = \Gamma(v)$
with 
$X(\alpha U \beta)$
in them
and not satisfied in between.
But $v$ will be below
$x_{N_\beta}$
at the first such jump,
meaning that
$\beta$ is satisfied
between
$u$ and $v$.
That is
a contradiction to the \prunez\ rule being applicable to this triple.

Thus the $x_i$ sequence 
stays within descendants of $x_{N_\beta}$
forever after $N_\beta$.

The above reasoning applies to all
eventualities in $\Gamma(u_0)$.
Thus, after they are each satisfied,
the construction $x_i$
does not jump up above any of them.
When the next supposed
jump involving
$u_0$ with some $v$ and
{ (perhaps)}
 $w$
happens after that
it is clear that
all of the eventualities
in $\Gamma(u_0)$
are satisfied above $v$.

This is a contradiction
to such a jump ever happening.
Thus we can conclude that
there are not an infinite number of jumps after all.
The construction must finish with a tick.
This is the end of the completeness proof.
\end{proof}

\section{Complexity and Implementation}
\label{sec:complex}

Deciding LTL satisfiability is in PSPACE \cite{SiC85}.
(In fact our tableau approach can be used to show that via
Savitch's theorem by 
assuming guessing the right branch.)

The tableau search through the new tableau,
even in a non-parallel implementation,
should (theoretically) be able to be implemented to run
faster than that through
the state of the art tableau technique
of \cite{Schwe98}.
This is because
there is less information to keep track of and
no backtracking from potentially successful branches
when a repeated label is discovered.

A variety of implementations are currently underway
at Udine, including some comparative
experiments with
other available LTLSAT checkers.
Early results are very promising and 
publications on the results
will be forthcoming.
For now, 
as this paper is primarily about the
theory behind the new rules,
we have provided
a demonstration Java implementation
to allow readers to experiment with
the way that the tableau works.
The program allows comparison with 
a corresponding implementation of
the Schwendimann
tableau.
The demonstration Java implementation
is available at
\webpage.
This allows users to understand the tableau building process
in a step by step way.
It is not 
designed as a fast implementation.
However, it does report on
ow many tableau construction steps were taken.

Detailed comparisons of the running times
are available 
\shortversion{in \cite{ltlsattabLONG}}
\longversion{via the web page.}
In Figure~\ref{fig:compare},
we give a small selection to give the
idea of the experimental results.
This is just on a quite long formula,
``Rozier 9",
one very long formula
``anzu amba amba 6"
from the so-called Rozier counter example series
of \cite{VSchuppanLDarmawan-ATVA-2011}
and an interesting property
``foo4" (described below).
Shown is formula length,
running time in seconds (on a standard laptop),
number of tableau steps and
the maximum depth of a branch in
poised states.
As claimed, on many examples
the new tableau needs roughly the same number
of steps and the same amount of time
on each step.
However, there are interesting formulas
(such as the foo series)
for which the new tableau makes
a significant saving.
 
\begin{figure}
\begin{center}
$\;$\\
\begin{tabular}{|c|c|c|c|c|c|c|c|}
\hline
\multicolumn{2}{|c|}{fmla} &  \multicolumn{3}{c|}{Reynolds} & \multicolumn{3}{c|}{Schwendimann}\\
\hline
& length & sec & steps & depth & sec & steps & depth \\
\hline
r9 & 277 & 109 & 240k & 4609 & 112 & 242k & 4609 \\
as6 & 1864 & 0.001 & 54 & 2 & 0.001 & 55 & 2 \\
foo4 & 84 & 7.25 & 7007k & 9 & 15.9 & 16232k & 3 \\
\hline
\end{tabular}
\end{center}
\caption{Comparison of the two tableaux from the Java implementation}
\label{fig:compare}
\end{figure}

We use benchmark
formulas from the
various series
described in \cite{VSchuppanLDarmawan-ATVA-2011}.
The only new series
we add
is $\mbox{foo}_n$
as follows:
for all $n \geq 2$,
\[
\mbox{foo}_n=
a
\wedge 
G(a \leftrightarrow X \neg a)
\wedge \bigwedge_{i=1}^n GF b_i
\wedge \bigwedge_{i=1}^n G( b_i  \rightarrow  \neg a)
\wedge \bigwedge_{i=1}^{n-1} \bigwedge_{j=i+1}^n G \neg( b_i \wedge b_j )
\]
Originally, the $\mbox{foo}_n$ series
was invented by the author to
obtain an idea experimentally
how much longer it might take the
new tableau compared to
Schwendimann's
on testing examples
but the experiments show
the opposite outcome.

\section{Comparisons with the Schwendimann Tableau}
\label{sec:compare}

In this section we
give a detailed comparison of the
new tableau's operation
compared to that of the
Schwendimann's tableau.

\subsection{Schwendimann's Tableau}

From \cite{Schwe98}.

The tableau is a labelled tree. Suppose we are to decide the
satisfiability or not of $\phi$.

Labels are of the form
$(\Gamma, S, R)$ where
$\Gamma$ is a set of formulas
(subformulas of $\phi$),
$S=(Ev,Br)$ is a pair
(described shortly)
and
$R$ is a pair
of the form
$(n,uev)$ where $n \in \natn$
and $uev \subseteq \clos(\phi)$.

The $\Gamma$ part of a node label
$(\Gamma, S, R)$
is a set of subformulas
of $\phi$
serving a similar purpose 
to our node labels.

The second component
$S=(Ev,Br)$ is a pair
where $Ev$ is a set of formulas,
and $Br$ is a
list of pairs
each of the form $(\Gamma_i,E_i)$.
The set $Ev$
records the eventualities cured at the
current state.
The list
$Br=\langle (\Gamma_1,Ev_1), ..., (\Gamma_m,Ev_m) \rangle$
is a 
record of useful
information from the
state ancestors of the current node
along the current branch.
Each $(\Gamma_i,Ev_i)$
records the poised label $\Gamma_i$ of the $i$th state
down the branch
and the set $Ev_i$ of eventualities fulfilled there.

The second component $S$
essentially contains information about the
labels on the current branch of the tree
and so is just a different way of managing
the same sort of ``historical'' recording
that we manage in our new tableau.
We manage the historical records
by having the branch ancestor labels directly
available for checking.

The third component
$(n,uev)$,  a pair consisting
of a number and a set of formulas,
is specific to the Schwendimann
tableau process
and has no analogous component
in our new tableau.
The number $n$ records the highest 
index of ancestor state 
in the current branch that
can be reached directly
by an up-link
from a descendent node
of the current node.
The set $uev$ contains the
eventualities which are
in the current node label
but which are not cured
by the time of the 
end of the branch below.

The elements of the third component
are not known 
as the tableau is constructed 
until 
all descendants
on all branches
below the current node
are expanded.
Then the values can be filled
in using some straightforward 
but slightly lengthy rules
about how to compute them
from children to parent nodes.

The rules for working out these
components
are slightly complicated in the
case of
disjunctive rules,
including expansion of $U$ formulas.
The third component
helps assess when a 
branch and the whole tableau
can be finished.

Apart from having to deal with the
second and third components
of the labels,
the rules are largely similar
to the rules for the new tableau.
However,
there are no prune rules.
The Schwendimann tableau
does not continue a branch
if there is a node with the
same 
label as one of its proper ancestors.
The branch stops there.

The other main difference to note is that
the Schwendimann tableau construction
rules
for disjunctive formulas
do in general need to
combine information from both
children's branches
to compute the
label on the parent.

\subsection{Example}

Consider the example
\[
\theta= \mbox{foo}_2=
a
\wedge 
G(a \leftrightarrow X \neg a)
\wedge GF b_1 
\wedge GF b_2 
\wedge G( b_1  \rightarrow  \neg a)
\wedge G( b_2    \rightarrow   \neg a)
\wedge G \neg( b_1 \wedge b_2 ).
\]

Technically Schwendimann tableau assumes all formulas
(the input formula and its subformulas)
are in {\em negation normal form}
which involves some rewriting so that negations
only appear before atomic propositions.
However, we can 
make a minor modification
and assume the static rules
are the same as our new ones.

We start a tableau with 
the label
$(\Gamma,S,R)$
with
$\Gamma= \{ \theta \}$,
$S=( \{ \}, \langle \rangle)$
and
$R= (n,uev)$ with both $n$ and $uev$ unknown as yet.

Tableau rules decompose
$\theta$ and subsequent subformulas
in a similar way to in our new tableau.
Neither $S$ nor $R$ change
while we do not apply a step rule (known as $X$ rule).

The conjunctions and $G$ rules are just as in our new tableau.
Thus the tableau construction
soon reaches a node with the label
$(\Gamma,S,R)$ as follows:
$S=( \{ \}, \langle \rangle)$,
$R= (n,uev)$ with both $n$ and $uev$ unknown as yet
and
\[
\Gamma = \{
a, 
a \leftrightarrow X \neg a,
XG(a \leftrightarrow X \neg a),
G(b_1 \rightarrow \neg a),
G(b_2 \rightarrow \neg a),
G\neg (b_1 \wedge b_2),
GFb_1,
GFb_2
\}.
\]

There are six choices causing branches within this state caused by the
disjuncts and the $F \beta$ formulas.
This could lead to 64 different branches before we complete the
state but many of these choices lead immediately to contradictions.
The details depend on the order of choice of decomposing formulas.

A typical expansion leads to a branch with a node labelled by the
state $(\Gamma,S,R)$ as follows:
$S=( \{ \}, \langle \rangle)$,
$R= (n,uev)$ with both $n$ and $uev$ unknown as yet
and

\[
\begin{array}{ll}
\Gamma_0 = \{ &
a, 
X \neg a,
XG(\neg a \leftrightarrow X \neg a),
\neg b_1,
XG(b_1 \rightarrow \neg a),
\neg b_2,\\
& 
XG(b_2 \rightarrow \neg a),
XG\neg (b_1 \wedge b_2),
XFb_1,
XGFb_1,
XFb_2,
XGFb_2
\}.
\end{array}
\]

Notice that the first component of $S$
is still empty as,
in this case,
none of the eventualities,
$Fb_1$ nor $Fb_2$,
is fulfilled here.

The next rule to use in such a situation
is the transition rule, which is very similar to that
in our new tableau,
except that
the second component part of the
label is updated as well.
We find
$S=( \{ \}, \langle (\{\}, \Gamma_0) \rangle)$,
$R= (n,uev)$ with both $n$ and $uev$ unknown as yet
and
\[
\Gamma = \{
\neg a,
G(\neg a \leftrightarrow X \neg a),
G(b_1 \rightarrow \neg a),
G(b_2 \rightarrow \neg a),
G\neg (b_1 \wedge b_2),
Fb_1,
GFb_1,
Fb_2,
GFb_2
\}.
\]

A further series of expansions and choices leads us to
the following new state,
$S=( \{ b_1 \}, \langle (\{\}, \Gamma_0) \rangle)$,
$R= (n,uev)$ with both $n$ and $uev$ unknown as yet
and
\[
\begin{array}{ll}
\Gamma_1 = \{ &
\neg a, 
X a,
XG(\neg a \leftrightarrow X \neg a),
b_1,
XG(b_1 \rightarrow \neg a),
\neg b_2,\\
& 
XG(b_2 \rightarrow \neg a),
XG\neg (b_1 \wedge b_2),
XGFb_1,
XFb_2,
XGFb_2
\}.
\end{array}
\]
Applying the step rule here gives us the following:
$S=( \{ \}, \langle (\{\}, \Gamma_0), (\{b_1\}, \Gamma_1) \rangle)$,
$R= (n,uev)$ with both $n$ and $uev$ unknown as yet
and
\[
\Gamma = \{ 
a,
G(\neg a \leftrightarrow X \neg a),
G(b_1 \rightarrow \neg a),
G(b_2 \rightarrow \neg a),
G\neg (b_1 \wedge b_2),
Fb_1,
GFb_1,
Fb_2,
GFb_2
\}.
\]

When this is branch is expanded further then we find
ourselves at
$(\Gamma_0, S, R)$ with
$S=( \{ \}, \langle (\{\}, \Gamma_0), (\{b_1\}, \Gamma_1) \rangle)$ and
$R= (n,uev)$ with both $n$ and $uev$ unknown as yet.
Notice the main first part of the label has
ended up being $\Gamma_0$ again.
Thus we have a situation for the LOOP rule.

The loop rule uses the facts that the state repeated has the index
$1$ in the branch above and that the eventuality $Fb_2$ has not been fulfilled in this
branch.
Thus we can fill in 
$n=1$ and $uev=\{ Fb_2 \}$ 
at all the intervening pre-states
which have been left unknown so far.
These values do not as yet transfer
up to the top state as yet as there are still
other undeveloped branches from within that state.

The other two states that we find are a 
minor variation on $\Gamma_1$,
\[
\begin{array}{ll}
\Gamma_2 = \{ &
\neg a, 
X a,
XG(\neg a \leftrightarrow X \neg a),
b_1,
XG(b_1 \rightarrow \neg a),
\neg b_2,\\
& 
XG(b_2 \rightarrow \neg a),
XG\neg (b_1 \wedge b_2),
XFb_1,
XGFb_1,
XFb_2,
XGFb_2
\},
\end{array}
\]
and a mirror image of $\Gamma_1$
when $b_2$ is true
instead of $b_1$:
\[
\begin{array}{ll}
\Gamma_3 = \{ &
\neg a, 
X a,
XG(\neg a \leftrightarrow X \neg a),
\neg b_1,
XG(b_1 \rightarrow \neg a),
b_2,\\
& 
XG(b_2 \rightarrow \neg a),
XG\neg (b_1 \wedge b_2),
XFb_1,
XGFb_1,
XFb_2,
XGFb_2
\}.
\end{array}
\]
Then all eventualities are fulfilled
and the tableau succeeds
after 3933 steps.
The overall picture of
states (poised labels) is
as follows
(with $1$ standing for $\Gamma_1$ etc).

\[
\xy
(40,20)*{0};
{\ar(40,18)*{};(20,13)*{}};
{\ar(40,18)*{};(60,13)*{}};
(20,10)*{1};
{\ar(20,7)*{};(20,3)*{}};
(20,0)*{0};
{\ar@/^1.0pc/(22,0)*{};(38,18)*{}};
{\ar(40,17)*{};(40,13)*{}};
(40,10)*{2};
{\ar(40,7)*{};(40,3)*{}};
(40,0)*{0};
{\ar@/^2.0pc/(38,0)*{};(38,18)*{}};
(60,10)*{3};
{\ar(60,7)*{};(60,3)*{}};
(60,0)*{0};
{\ar@/_1.0pc/(58,0)*{};(42,18)*{}};
(60,-5)*{\surd};
\endxy
\]

\subsection{Same Example in New Tableau}

By comparison, the new tableau
visits the same states 
but takes only 3087 steps to
proceed as follows:

\[
\xy
(40,40)*{0};
{\ar(40,37)*{};(40,33)*{}};
(40,30)*{1};
{\ar(40,27)*{};(40,23)*{}};
(40,20)*{0};
{\ar(40,18)*{};(20,13)*{}};
{\ar(40,18)*{};(60,13)*{}};
(20,10)*{1};
(20,5)*{\times};
{\ar(40,17)*{};(40,13)*{}};
(40,10)*{2};
{\ar(40,7)*{};(40,2)*{}};
(40,0)*{0};
(40,-5)*{\times};
(60,10)*{3};
{\ar(60,7)*{};(60,2)*{}};
(60,0)*{0};
{\ar@/_2.0pc/(62,0)*{};(42,40)*{}};
(60,-5)*{\surd};
\endxy
\]

\subsection{Comparisons}

The new tableau can always decide on the basis of a single branch, working downwards. Schwendimann's needs communication up and between branches, and
will generally require full development of several branches.

The new tableau just needs to store formula labels down current branch. 
However, if doing depth-first search,
also needs information about choices made to enable backtracking.
Schwendimann's tableau needs to pass extra sets of unfulfilled eventualities 
back up branches
also keeping track of indices while backtracking.
It also needs to store 
 information about choices made to enable backtracking.

\section{Conclusion}
\label{sec:concl}

We have introduced 
novel tableau construction rules
which support a new
tree-shaped, one-pass tableau system
for LTLSAT.
It is traditional in style, simple in all aspects
with no extra notations on nodes,
neat to introduce to students,
amenable to manual use and
promises efficient and fast automation.

In searching or constructing the tableau
one can explore down branches completely independently
and further break up the search down individual
branches into separate
 independent
processes.
Thus it is particularly suited to parallel implementations.

Experiments show that, even in a standard depth-first search implementation,
it is also
competitive with the current state of the art
in tableau-based approaches to
LTL satisfiability checking.
This is good reason to believe that
very efficient implementations can
be achieved as the
step by step construction task
is particularly simple and
requires on minimal storage and
testing.

Because of the simplicity,
it also seems to be a good base for more intelligent
and sophisticated algorithms:
including heuristics for choosing amongst
branches and ways of managing sequences
of label sets.

The idea of the PRUNE rules
potentially have many other applications.


\newcommand{\etalchar}[1]{$^{#1}$}

\end{document}